\newcolumntype{Y}{>{\centering\arraybackslash}X}
\newcolumntype{L}[1]{>{\raggedright\arraybackslash}p{#1}}
\newcolumntype{C}[1]{>{\centering\arraybackslash}p{#1}}
\newcolumntype{R}[1]{>{\raggedleft\arraybackslash}p{#1}}
\let\savebigtimes\bigtimes
\let\bigtimes\relax
\let\bigtimes\savebigtimes
\newcommand{\tcb}{\textcolor{blue}}
\DeclareMathOperator*{\argmax}{arg\,max}
\DeclareMathOperator*{\argmin}{arg\,min}
\DeclareMathOperator*{\Var}{Var}
\DeclareMathOperator{\Tr}{Tr}
\newcommand{\E}{\ensuremath{{\mathbb E}}} 
\newcommand{\R}{\ensuremath{{\mathbb R}}}
\newcommand{\Sample}{\ensuremath{ \mathbf{\hat{S}}  }}
\newcommand{\f}[1]{F\left(#1\right)}
\newcommand{\Orth}{\ensuremath{\mathcal{O}}}
\newcommand{\Vsol}{\ensuremath{\mathbf{\hat{V}}}}
\newcommand{\VT}{\ensuremath{\mathbf{V}}}
\newcommand{\Vset}{\ensuremath{\mathcal{V}}}
\newcommand{\Vit}[1]{\ensuremath{\mathbf{V}^{[#1]}}}
\newcommand{\LQIS}{\ensuremath{\boldsymbol{\Lambda}_{0}}}
\newcommand{\eqiv}{\sim_{\boldsymbol{\Lambda}_{0}}}
\newcommand{\LRQIS}{\ensuremath{\boldsymbol{\Lambda}_{R}}}
\newcommand{\LT}{\ensuremath{\boldsymbol{\Lambda}}}
\newcommand{\EigFVhat}{\ensuremath{\boldsymbol{\hat{\Lambda}}}}
\newcommand{\EigFl}[1]{\ensuremath{\boldsymbol{\hat{\Lambda}}^{[#1]}}}
\newcommand{\Ortheq}{\ensuremath{\mathcal{O}_0}}
\newcommand{\HTyler}{\ensuremath{\mathbf{\hat{H}}_{T}}}
\newcommand{\NL}[1]{\mbox{NL}\left(#1\right)}
\newcommand{\LS}[1]{\mbox{LS}\left(#1\right)}
\newcommand{\diag}{\ensuremath{\mbox{diag}}}
\newcommand{\qed}{\hfill \ensuremath{\Box}}
\newcommand{\sI}{(I)}
\newcommand{\sA}{(A)}
\newcommand{\sF}{(F)}
\newcommand{\sIsig}{(I$'$)}
\newcommand{\sAsig}{(A$'$)}
\newcommand{\sFsig}{(F$'$)}
\newcommand*{\defeq}{\mathrel{\vcenter{\baselineskip0.5ex\lineskiplimit0pt\hbox{\scriptsize.}\hbox{\scriptsize.}}}=}
\newcommand*{\eqdef}{=\mathrel{\vcenter{\baselineskip0.5ex\lineskiplimit0pt\hbox{\scriptsize.}\hbox{\scriptsize.}}}}
\title{R-NL: Covariance Matrix Estimation for Elliptical
Distributions based on Nonlinear Shrinkage}
\author{{\large Simon Hediger}$^{a}$\hspace*{2mm}
\hspace*{2mm}{\large Jeffrey N{\"a}f$^{b}$\footnote{Corresponding author at: PreMeDICaL, Inria-Inserm, Montpellier, France.\newline E-mail address: \href{mailto:jeffrey.naf@inria.fr}{jeffrey.naf@inria.fr}.}}\hspace*{2mm}
\hspace*{2mm}{\large Michael Wolf}$^{a}$
\\[3mm]
$^{a}$\textit{\small Department of Economics, University of Zurich, Switzerland}\\
$^{b}$\textit{\small PreMeDICaL, Inria-Inserm, Montpellier, France}\
}
\begin{document}

\def\spacingset#1{\renewcommand{\baselinestretch}%
{#1}\small\normalsize} \spacingset{1}



\maketitle
\thispagestyle{empty}


\begin{abstract}
We combine Tyler's robust estimator of the dispersion matrix with nonlinear shrinkage. This approach delivers a simple and fast estimator of the dispersion matrix
in elliptical models that is robust against both heavy tails and 
high dimensions. We prove convergence of the iterative part of our
algorithm and demonstrate the favorable performance of the estimator
in a wide range of simulation scenarios. Finally, an empirical
application demonstrates its state-of-the-art performance on real
data.
\end{abstract}

\noindent \textbf{Keywords}: Heavy Tails, Nonlinear Shrinkage, Portfolio Optimization

\newpage



\section{Introduction}

Many statistical applications rely on covariance matrix
estimation. Two common challenges are (1)~the presence of heavy tails and (2)~the high-dimensional nature of the data. Both problems lead to suboptimal performance or even inconsistency of the usual sample covariance estimator $\Sample$. Consequently, there is a vast literature on addressing these problems.

Two prominent ways to address (1) are (Maronna's)
\mbox{$M$-estimators} of scatter (\cite{MestimatorofScatter}), as well
as truncation of the sample covariance matrix; for example, see \cite{user_friendly_cov}. There also appear to be two main approaches to solving problem (2). The first is to assume a specific structure on the covariance matrix to reduce the number of parameters. One example of this is the ``spiked covariance model'', as explored e.g., in \cite{spikemodel1, spikemodel2, optimalshrinkageinspikedcovariancemodel}, a second is to assume (approximate) sparsity and to use thresholding estimators (\cite{sparsity1,sparsity2,sparsity3,sparsity4}). We also refer to \cite{user_friendly_cov} who present a range of general estimators under heavy tails and extend to the case $n > p$, by assuming specific structures on the covariance matrix. If one is not willing to assume such structure, a second approach is to leave the eigenvectors of the sample covariance matrix unchanged and to only adapt the eigenvalues. This leads to the class of estimators of \cite{stein:1975,stein:1986}. Linear shrinkage (\cite{Wolflinear}) as well as nonlinear shrinkage developed in \cite{WolfNL, Wolf2015,
  Analytical_Shrinkage, QIS2020} are part of this class. 

One promising line of research to
address both problems at once is to extend (Maronna's) \mbox{$M$-estimators}
of scatter
\citep{MestimatorofScatter} with a form of shrinkage for high
dimensions. This approach is in particular popular with a specific
example of $M$-estimators called ``Tyler's estimator'' (\cite{Tyler}),
which is derived in the context of elliptical distributions. Several papers have studied this approach, using a convex combination
of the base estimator and a target matrix, usually the (scaled)
identity matrix.  We generally refer to such approaches as robust
linear shrinkage estimators. For instance, \cite{possiblecompet_2014,
  M_estimatortheory_2016, Shrinkage_2021, Hall_Cov} combine the linear
shrinkage with Maronna's $M$-estimators, whereas
\cite{firstshrinkage, linearshrinkageinheavytails, yang2014minimum,
  zhang2016automatic} do so with Tyler's estimator.
Since this approach of combining linear shrinkage with a robust estimator
entails choosing a hyperparameter determining the amount of
shrinkage, the second step often consists of deriving some
(asymptotically) optimal parameter that then can be estimated from
data. The approach results in estimation methods that are
generally computationally inexpensive and it also enables 
strong theoretical results on the convergence of the
underlying iterative algorithms.

Despite these advantages, several problems remain. First, the
performance of these robust methods sometimes does not exceed the
performance of the basic linear shrinkage estimator of
\cite{Wolflinear} in heavy-tailed models, except for small
sample sizes $n$ (say $n < 100$). In fact, the theoretical analysis of
\cite{couillet2014large,M_estimatortheory_2016} shows that robust
$M$-estimators using linear shrinkage are asymptotically equivalent to
scaled versions of the linear shrinkage estimator of
\cite{Wolflinear}. Depending on how the data-adaptive hyperparameter
is chosen, the performance can even deteriorate quickly as the tails
get lighter, as we demonstrate in our simulation study in
Section~\ref{sec:mc}. Second, some robust methods cannot handle the case when
the dimension $p$ is larger than the sample size $n$, such as
\cite{Shrinkage_2021}. Third, some methods propose a choice of
hyperparameter(s) through cross-validation, such as
\cite{Shrinkage_2017, convexpenalties}, which can be computationally expensive. In this paper, we address these problems by developing a simple
algorithm based on \emph{nonlinear} shrinkage (\cite{WolfNL, Wolf2015,
  Analytical_Shrinkage, QIS2020}), inspired by the above robust
approaches and the work of \cite{comfortNL}. In essence, the
algorithm applies the quadratic inverse shrinkage (QIS) method of
\cite{QIS2020} to appropriately standardized data, thereby greatly
increasing its finite-sample performance in heavy-tailed models. Thus,
we refer to the new method as ``Robust Nonlinear Shrinkage''
(\mbox{R-NL}); in
particular, we extend the proposal of \cite{comfortNL}
from a parametric model to general elliptical distributions. This
approach includes an iteration over the space of orthogonal matrices,
which we prove converges to a stationary point. We motivate our
approach using properties of elliptical distributions along the lines
of \cite{linearshrinkageinheavytails, zhang2016automatic, Hall_Cov}
and demonstrate the favorable performance of our method in a wide range
of settings. Notably, our approach (i) greatly improves the
performance of (standard) nonlinear shrinkage in heavy-tailed settings;
does
not deteriorate when moving from heavy to Gaussian tails; (iii) can
handle the case $p>n$; and (iv) does not require the choice of a
tuning parameter.

The remainder of the article is organized as follows. Section
\ref{sec: cont} lists our contributions. Section
\ref{motivatingexample} presents an example to motivate our
methodology. Section \ref{sec:method} describes the proposed new
methodology and provides results concerning the convergence of the new
algorithm. Section \ref{sec:mc} showcases the performance of our
method in a simulation study using various settings for both $p <
n$ and $p > n$. Section \ref{sec:empirics} applies our method to
financial data, illustrating the performance of the method on
real data.

\subsection{Contributions}
\label{sec: cont}

To the best of our knowledge, no paper has so far attempted to combine
nonlinear shrinkage of \cite{WolfNL, Wolf2015, Analytical_Shrinkage,
  QIS2020} with Tyler's method. As such, our approach differs markedly
from previous ones. It is partly based on an $M$-estimator
interpretation, but also adds the nonparametric nonlinear shrinkage
approach. A downside of this approach is  that theoretical
convergence results 
are harder to come by. Nonetheless, we are able to show that the
iterative part of our algorithm converges to a stationary point, a
crucial result for the practical usefulness of the algorithm.

Maybe the closest paper to our method is \cite{ourclosestcompetitor}, where the eigenvalues of Tyler's estimator are iteratively shrunken towards predetermined target eigenvalues, with a parameter $\alpha$ determining the shrinkage strength. Through different objectives, they arrive at an algorithm from which the iterative part of our Algorithm \ref{RNL} can be recovered when setting $\alpha= \infty$. Additionally, using the eigenvalues from nonlinear shrinkage as the target eigenvalues, their method presents an alternative way of combining Tyler's estimator with nonlinear shrinkage. Though they did not originally propose this, this was suggested by an anonymous reviewer. However, while there is an overlap in the two algorithms for the corner case of $\alpha=\infty$, they arrive at their Algorithm 1 from a different angle than we do. Consequently, their theoretical results cannot be applied in our analysis. Moreover, they do not suggest how to choose the tuning parameter $\alpha$. In Appendix \ref{sec:further_results}, simulations indicate that when the target eigenvalues are obtained from nonlinear shrinkage, setting $\alpha=\infty$, and thus maximally shrinking towards the nonlinear shrinkage eigenvalues, is usually beneficial. In addition, these simulations show that the updating of eigenvalues we propose after the iterations converged can lead to an additional boost in performance over their method.




Whereas many of the aforementioned robust linear shrinkage papers have
important theoretical results, the empirical examination of their
estimators in simulations and real data applications is often
limited. We attempt to give a more comprehensive empirical overview in
this paper. Contrary to most of the previous papers, we also consider
a comparatively large sample size of $n=300$ in our simulation
study. Compared to 6 competing methods, our new approach displays a superior performance over a wide range of scenarios. We also
provide a Matlab implementation of our method, as well as the code to
replicate all simulations on \url{https://github.com/hedigers/RNL_Code}.


\begin{table*}
\caption{Notation}
\centering
\begin{tabular}{l|l}                                                            
Symbol & Description \\
\hline
$n$ & Sample size\\
$p$ & Dimensionality\\
$\boldsymbol{\Sigma}\defeq \Var(\mathbf{Y})$ & The covariance matrix of the random vector $\mathbf{Y}$.\\
$\Tr(\mathbf{A})$ & Trace of a square matrix $\mathbf{A}$\\
$\|\mathbf{A}\|_F$ & Frobenius norm
                     $\sqrt{\Tr(\mathbf{A}^{\top}\mathbf{A})}$ of a
                     sqaure matrix $\mathbf{A}$\\
$\mathbf{H}$ & dispersion matrix \\
$\Orth$ & the orthogonal group\\
$\Ortheq$ & equivalence class in $\Orth$\\
$\mathbf{U}$ & arbitrary element of $\Orth$\\
$\VT$ & Eigenvectors of $\mathbf{H}=\VT \LT \VT^{\top}$\\
$\Vit{\ell}$ & $\ell$th iteration of the algorithm\\
$\Vsol$ & critical point/solution/estimate\\
$\Vset$ & subset of critical points of $\Orth$\\
$\LT$ & True ordered eigenvalues of $\mathbf{H}$, up to scaling\\
$\LQIS$ & Initial (shrunken) estimate of $\LT$\\
$\LRQIS$ & Final \mbox{R-NL} (shrunken) estimate of $\LT$\\
$\EigFVhat$ & Eigenvalues of $\f{\Vsol}$\\
$\EigFl{\ell+1}$ & Eigenvalues of $\f{\Vit{\ell}}$\\
$\diag()$ & Transforms a vector $\mathbf{a} \in \R^p$ into an $p \times p$ diagonal matrix $\diag(\mathbf{a})$   
\end{tabular}
\label{tab:notation}
\end{table*}

\section{Motivational Example}\label{motivatingexample}

For a collection of $n$ independent and identically distributed
(i.i.d.) random vectors with values in $\R^p$, let
$\Vsol=\begin{pmatrix} \mathbf{\hat{v}}_{1}, \ldots,
  \mathbf{\hat{v}}_{p} \end{pmatrix}$ be the matrix of eigenvectors of
the sample covariance matrix $\Sample$. Nonlinear shrinkage, just as
the linear shrinkage of \cite{Wolflinear}, only changes the eigenvalues of
the sample covariance matrix, while keeping the eigenvectors
$\Vsol$. That is, nonlinear shrinkage is also in the class of estimators of
the form $\Vsol \Delta \Vsol^{\top}$, with $\Delta$ diagonal, a class that
goes back to \cite{stein:1975,stein:1986}.
It is well known that
\begin{align*}
    &\argmin_{\Delta \text{ diagonal }} \| \boldsymbol{\Sigma} - \Vsol
  \Delta \Vsol^{\top} \|_{F} = \diag \Bigl ( \begin{pmatrix} \delta_1
    &  \ldots& \delta_N \end{pmatrix}^{\top} \Bigr ) \\ &\mbox{with}
               \quad
    \delta_j:=\mathbf{\hat{v}}_{j}^{\top}
\boldsymbol{\Sigma}\mathbf{\hat{v}}_{j}~;        
\end{align*}
for example, see \cite[Section 3.1]{ledoit:wolf:power}. Nonlinear shrinkage takes the sample covariance
matrix $\Sample$ as an input and outputs a shrunken estimate of
$\boldsymbol{\Sigma}$ of the form $\Vsol \LQIS \Vsol^{\top}$, where
$\LQIS=\mbox{diag}(\hat{\delta}_1, \ldots, \hat{\delta}_N)$ is a
diagonal matrix. Although there are different schemes to come with
estimates $\{\hat \delta_j\}$, each scheme uses as the only inputs
$p$, $n$, and the set of eigenvalues of $\Sample$.
In this paper we derive a new estimator
that is not in the class of \cite{stein:1975,stein:1986}
but applies nonlinear shrinkage to a 
transformation of the data. It thereby implicitly uses more
information than just the sample covariance matrix (together with $p$
and $n$).
Since we focus in
the following on the class of elliptical distributions, we will
differentiate between the dispersion matrix $\mathbf{H}$ and the
covariance matrix $\boldsymbol{\Sigma}$. The former will be defined in Section \ref{sec:method}, but the main difference between the two population quantities is that $\boldsymbol{\Sigma}$ might not
exist. If it does exist, $\boldsymbol{\Sigma}$ is simply
given by $c \mathbf{H}$, with $c > 0$ depending on the
underlying distribution.  

To illustrate the advantage of our method, we now present  a
motivational toy example before moving on to the general
methodology.
We first consider a multivariate Gaussian distribution in dimension $p=200$ with mean
$\boldsymbol{\mu}=\mathbf{0}$ and covariance matrix
$\boldsymbol{\Sigma}=\mathbf{H}$, where the $(i,j)$
element of $\mathbf{H}$ is $0.7^{\mid i - j\mid}$, as
in \cite{linearshrinkageinheavytails}. We simulate $n=300$
i.i.d.\ observations from this distribution. For $j=1,\ldots,p$, the
left panel of Figure \ref{fig:Eig} displays the theoretical optimum
$\delta_j$, $\mathbf{\hat{v}}_{j}^{\top}\Vsol \LQIS
\Vsol\mathbf{\hat{v}}_{j}\eqdef \hat{\delta}_j$, as well as
$\mathbf{\hat{v}}_{j}^{\top}\mathbf{\hat{H}}\mathbf{\hat{v}}_{j}$, where $\mathbf{\hat{H}}$ is the proposed R-NL estimator. Importantly,
the estimated values are very close to the theoretical optimum
$\delta_j$, $j=1,\ldots,p$, for both nonlinear shrinkage and our
proposed method.

We next consider the same setting, but instead simulate from a
multivariate $t$ distribution with $4$ degrees of freedom and
dispersion matrix $\mathbf{H}$, such that the covariance
matrix~$\boldsymbol{\Sigma}$ is $4/(4-2) \cdot \mathbf{H}$. In
particular the
$\mathbf{\hat{v}}_{j}^{\top}\mathbf{\hat{H}}\mathbf{\hat{v}}_{j}$ are
multiplied by $c=2$ in this case to obtain an estimate of
$\mathbf{\hat{v}}_{j}^{\top}\boldsymbol{\Sigma}\mathbf{\hat{v}}_{j}$.
(The value $c=2$ would not be known in practice but is `fair' to use
it in this toy example, since doing so does not favor one estimation
method over the other).
The left panel of Figure \ref{fig:Eig} displays the results.
It can be seen that nonlinear shrinkage overestimates large values of
$\delta_j$ (by a lot) and underestimates small values of $\delta_j$;
on the other hand, our new method does not have this problem and its
performance (almost) matches the one from the Gaussian case.

\begin{figure}
    \vspace{-0.25cm}
    \begin{adjustwidth}{-0cm}{}
    \includegraphics[width=\columnwidth]{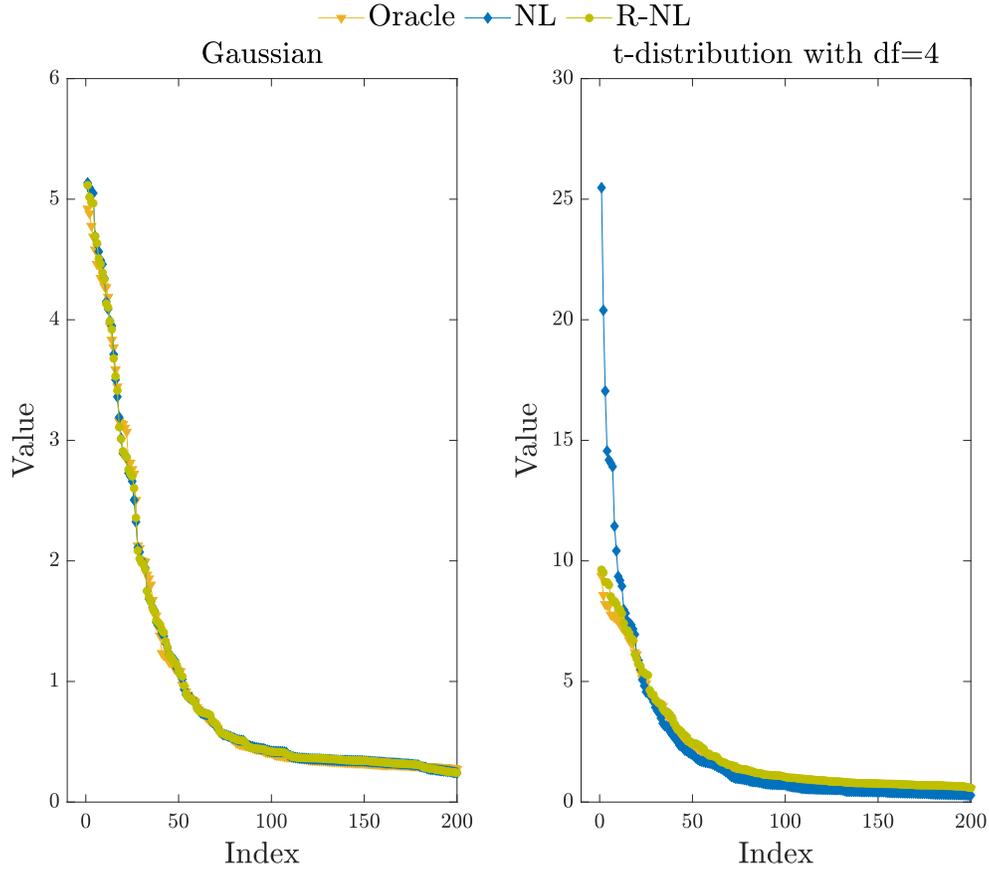}
    \end{adjustwidth}
        \caption{Comparison of the estimated values of nonlinear shrinkage and R-NL. The theoretical optimal eigenvalues
      $\mathbf{\hat{v}}_{j}^{\top}
      \boldsymbol{\Sigma}\mathbf{\hat{v}}_{j}$
      are denoted by
      `Oracle''. In the left panel the sample is taken from a
      multivariate Gaussian distribution and in the right panel from
      a multivariate $t$-distribution with 4~degrees of freedom.
      For both distributions the
      $(i,j)$ element of
      $\mathbf{H}$ is $0.7^{\mid i - j\mid}$, as in
      \cite{linearshrinkageinheavytails}. The number
      of observations is $n=300$ and the dimension is $p=200$.}
     \label{fig:Eig}
\end{figure}


\section{Methodology}
\label{sec:method}

We assume to observe an i.i.d.\ sample $\mathbb{Y}\defeq \{\mathbf{Y}_1, \ldots, \mathbf{Y}_n\} $ from a $p$-dimensional elliptical distribution. If $\mathbf{Y}$ has an elliptical distribution it can be represented as 
\begin{align}\label{distform}
    \mathbf{Y} \stackrel{D}{=} \boldsymbol{\mu} + R \mathbf{H}^{1/2}  \boldsymbol{\xi}~, 
\end{align}
where $R$ is a positive random variable, and $\boldsymbol{\xi}$ is
uniformly distributed on the $p$-dimensional unit sphere, independently
of~$R$, and $\stackrel{D}{=}$ denotes equality in distribution
\citep{ellipticaldisttheory}. The dispersion matrix $\mathbf{H}$ is
assumed to be symmetric positive-definite (pd), with
eigendecomposition $\mathbf{H}=\VT  \LT \VT^{\top} $. If $\mathbf{Y}$
meets \eqref{distform}, we write $\mathbf{Y} \sim
E_{p}(\boldsymbol{\mu}, \mathbf{H}, g )$, where $g$ is the
``generator'' that identifies the distribution of $R$; for example,
see \cite{Kotzelliptical}. We assume this
generator to exist, which is equivalent to $R$ having a density
\citep{Kotzelliptical}. 

In the following we restrict ourselves to distributions of the form
\eqref{distform} with $\boldsymbol{\mu}=\mathbf{0}$ and such that
second moments exist. The assumption $\boldsymbol{\mu}=\mathbf{0}$ is used for simplicity, though it is not necessarily restrictive in the context of elliptical distributions. We refer to the discussion in \cite[Section 2]{thresholdingandTyler}. Then $\Var(\mathbf{Y})=c \mathbf{H}$ for some $c
> 0$. Following \cite{linearshrinkageinheavytails}, we will normalize our estimators of $\mathbf{H}$ to have trace $p$. We note however that, to obtain
an estimator of $\boldsymbol{\Sigma}$, one could instead normalize the
estimator to have the same trace as $\hat{\mathbf{S}}$.  As an illustration,
in the example from the right panel of Figure~\ref{fig:Eig},
$\Tr(\mathbf{\hat{S}}) \approx 428$ whereas 
$\Tr(\boldsymbol{\Sigma}) = 400$.

\subsection{Robust Nonlinear Shrinkage}

We start by outlining our main idea. Let $\|\cdot \|$ be the Euclidean
norm on $\R^p$. As shown in \cite{centralangulargaussian}, if
$\mathbf{Y} \sim E_{p}(\boldsymbol{\mu}, \mathbf{H}, g )$,
$\mathbf{Z}\defeq\frac{\mathbf{Y}}{\|\mathbf{Y} \|}$ has a central
angular Gaussian distribution with density
\begin{align}\label{angularGaussiandist}
    p(\mathbf{z}; \mathbf{H} ) \propto |\mathbf{H}|^{-1/2} \cdot \left(\mathbf{z}^{\top} \mathbf{H}^{-1}\mathbf{z}  \right)^{-p/2}~,
\end{align}
where for $a,b \in \R$, $a \propto b$ means there exists $c > 0$ with $a=cb$. We will also write $\mathbf{A} \propto \mathbf{B}$, for two $p \times p$ matrices $\mathbf{A}, \mathbf{B}$ if $\mathbf{A} =c \mathbf{B}$. The likelihood in \eqref{angularGaussiandist} is the starting point of the original Tyler's method. Taking the derivative of \eqref{angularGaussiandist}, Tyler's estimator $ \HTyler$ is implicitly given by the following condition:
\begin{align}\label{Tylerestimation}
    \HTyler = \frac{p}{n} \sum_{t=1}^n \frac{\mathbf{Z}_t \mathbf{Z}_t^{\top}}{\mathbf{Z}_t^{\top} \HTyler^{-1} \mathbf{Z}_t}~.
\end{align}
This estimator is obtained as the limit of the iterations
\begin{align}\label{Tyleriteration}
    \mathbf{\hat{H}}^{[\ell+1]} \  \propto \  \frac{p}{n} \sum_{t=1}^n \frac{\mathbf{Z}_t \mathbf{Z}_t^{\top}}{\mathbf{Z}_t^{\top} (\mathbf{\hat{H}}^{[\ell]})^{-1} \mathbf{Z}_t}~,
\end{align}
where $\propto$ indicates that $\mathbf{\hat{H}}^{[\ell+1]}$ is
actually obtained after an additional trace-normalization step; for
example, see \cite{Tyler} or~\cite{linearshrinkageinheavytails}. Robust linear shrinkage methods such as the method of~\cite{linearshrinkageinheavytails} augment~\eqref{Tyleriteration} by shrinking towards the identity matrix in each iteration. That is, if for an $p \times p$ matrix $\mathbf{A}$ and $\rho \in [0,1]$, we define $\LS{\mathbf{A}, \rho} \defeq (1-\rho) \mathbf{A} + \rho \mathbf{I} $, then the robust linear shrinkage estimator is obtained from the iterations
\begin{align}\label{LSTyleriteration}
    \mathbf{\hat{H}}^{[\ell+1]} \  \propto \  \LS{ \frac{p}{n} \sum_{t=1}^n \frac{\mathbf{Z}_t \mathbf{Z}_t^{\top}}{\mathbf{Z}_t^{\top} (\mathbf{\hat{H}}^{[\ell]})^{-1} \mathbf{Z}_t}, \rho}~,
\end{align}
where again $\propto$ indicates a trace-normalization step.

Similarly, denote for any symmetric pd  matrix $\mathbf{A}$ by
$\NL{\mathbf{A}}$ the matrix that is obtained when using nonlinear
shrinkage on $\mathbf{A}$. A few clarifications are in order at this
point. First, in the existing literature on nonlinear shrinkage,
$\mathbf{A}$ is always the sample covariance matrix; but the
`algorithm' of nonlinear shrinkage allows for a more general input
instead. Second, there are (at least) three different nonlinear shrinkage
schemes by now: the numerical scheme called QuEST of \cite{Wolf2015},
the analytical scheme of \cite{Analytical_Shrinkage}, and the QIS
method of \cite{QIS2020}, which is also of analytical nature; our
methodology allows for the use of any such scheme, with our personal
choice being the QIS method. Third, any `algorithm' of nonlinear
shrinkage needs as an additional input to  $\mathbf{A}$, which of
course determines the dimension $p$, also the sample size $n$, which
we may treat as fixed and known in our methodology. 

Applying nonlinear shrinkage to the matrix $\mathbf{A}$ leaves its
eigenvectors unchanged and only changes its eigenvalues. The way the
eigenvalues are changed depends on the particular nonlinear shrinkage
scheme; for example, see \cite[Section 4.5]{QIS2020} for the details
concerning the QIS method.
In analogy to the case of linear
shrinkage, we could now apply nonlinear shrinkage each time in the
above iteration. That is, we could iterate
\begin{align}\label{QISTyleriteration}
    \mathbf{\hat{H}}^{[\ell+1]} \  \propto \  \NL{ \frac{p}{n} \sum_{t=1}^n \frac{\mathbf{Z}_t \mathbf{Z}_t^{\top}}{\mathbf{Z}_t^{\top} (\mathbf{\hat{H}}^{[\ell]})^{-1} \mathbf{Z}_t}}~,
\end{align}
where the input to NL corresponds to the sample covariance matrix of the scaled data $\mathbf{Z}_t/(\mathbf{Z}_t^{\top} (\mathbf{\hat{H}}^{[\ell]})^{-1} \mathbf{Z}_t/p)^{1/2}.$
Unfortunately, contrary to the case of linear shrinkage, it is not clear how to ensure convergence for such an approach. However, we note that iteration~\eqref{QISTyleriteration} can be seen as a simultaneous iteration over the eigenvalues and eigenvectors, whereby only the former is changed by nonlinear shrinkage. Following the ideas in \cite{comfortNL}, we instead aim to iterate over the eigenvectors for fixed (shrunken) eigenvalues. That is, after the first iteration, we fix the eigenvalues obtained by nonlinear shrinkage, denoted $\LQIS$. Choosing $\mathbf{\hat{H}}^{[0]}=\mathbf{I}$, this corresponds to using nonlinear shrinkage on the sample covariance matrix of $\mathbb{Z}\defeq \{\mathbf{Z}_1, \ldots, \mathbf{Z}_T \}$, with $\mathbf{Z}_t\defeq \mathbf{Y}_t/\| \mathbf{Y}_t \|$. 
It should be mentioned here that any nonlinear shrinkage scheme ensures that 
the elements on the diagonal of $\LQIS$, denoted
$\hat{\delta}_j$, $j=1,\ldots,p$, are all strictly positive.

We then optimize the likelihood of the central angular Gaussian distribution only with respect to the orthogonal matrix $\VT$. That is, we solve,
\begin{align}\label{problem}
     \Vsol &\defeq  \argmax_{ \mathbf{U} \in \Orth} \sum_{t=1}^{n} \ln(p(\mathbf{Z}_t; \mathbf{U}, \LQIS)) \nonumber \\
     &= \argmin_{ \mathbf{U} \in \Orth}  \frac{1}{n} \sum_{t=1}^{n}  \ln\left( \mathbf{Z}_t^{\top} \mathbf{U} \LQIS^{-1}  \mathbf{U}^{\top} \mathbf{Z}_t  \right)~,
\end{align}
where $\Orth\defeq \{\mathbf{U}: \mathbf{U}^{\top}\mathbf{U}=\mathbf{U}\mathbf{U}^{\top}=\mathbf{I}$\} is the orthogonal group. Finally, once $\Vsol$ is obtained, $\LQIS$ is updated. That is, we apply nonlinear shrinkage to the covariance matrix of the standardized data 
\begin{align}\label{standardized}
     \tilde{\mathbf{Z}}_t\defeq\frac{\mathbf{Z}_t   }{\sqrt{\mathbf{Z}_t^{\top}  \Vsol \LQIS^{-1}  \Vsol^{\top}  \mathbf{Z}_t/p}}~, t=1,\ldots,n,
\end{align}
to obtain $\LRQIS$. The final estimate is then given as 
\begin{align}\label{Hestimate}
    \mathbf{\hat{H}}\defeq p\cdot \Vsol \LRQIS \Vsol^{\top}/\Tr( \Vsol \LRQIS \Vsol^{\top})~. 
\end{align}
Since, as we will show below, the eigenvectors of the sample covariance matrix of $\{\mathbf{\tilde{Z}}_1, \ldots, \mathbf{\tilde{Z}}_n\}$ are again given by $\Vsol$, it holds that,
\begin{align*}
   \mathbf{\hat{H}} \  \propto \  \NL{ \frac{1}{n} \sum_{t=1}^{n}\frac{\mathbf{Z}_t \mathbf{Z}_t^{\top}  }{\mathbf{Z}_t^{\top}  \Vsol \LQIS^{-1}  \Vsol^{\top}  \mathbf{Z}_t}/p}~.
\end{align*}
The whole procedure is summarized in Algorithm \ref{RNL}. We now detail how to solve \eqref{problem}.

For an $p \times p$ symmetric pd matrix $\mathbf{A}$, let 
\[
\mathbf{A}=\mathbf{U}_A \boldsymbol{\Lambda }_A\mathbf{U}_A^{\top}\ ,
\]
be its eigendecomposition, where we assume the elements of $\boldsymbol{\Lambda }_A$ to be ordered from smallest to largest. We define $\mathcal{E}$ to be the operator that returns all possible matrices of eigenvectors. That is, $\mathcal{E}(\mathbf{A})$ is a subset of $\Orth$ and for any $\mathbf{U} \in \mathcal{E}(\mathbf{A})$, $\mathbf{U}^{\top} \mathbf{A} \mathbf{U}$ is a diagonal matrix with elements ordered from smallest to largest.

We also define in the following for $\mathbf{U} \in \Orth$,
\begin{align}\label{fdef}
    \f{\mathbf{U}}\defeq\sum_{t=1}^{n}\frac{\mathbf{Z}_t \mathbf{Z}_t^{\top} }{ \mathbf{Z}_t^{\top} \mathbf{U} \LQIS^{-1}  \mathbf{U}^{\top} \mathbf{Z}_t }~,
\end{align}
where the dependence on $\mathbf{Z}_1,\dots,\mathbf{Z}_n$ and $\LQIS$ is suppressed to keep notation compact.

\begin{restatable}{lemma}{lone}\label{existence}
A minimizer $\Vsol$ of \eqref{problem} exists and meets the condition
\begin{align}\label{critical_conditions}
  \Vsol^{\top} \f{\Vsol} \Vsol \LQIS^{-1}   = \LQIS^{-1}  \Vsol^{\top} \f{\Vsol} \Vsol~.
\end{align}
\end{restatable}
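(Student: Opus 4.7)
The plan has two parts: existence of a minimizer and derivation of the first-order condition.

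For existence, observe that the objective
\[ L(\mathbf{U}) := \frac{1}{n}\sum_{t=1}^{n} \ln\!\left( \mathbf{Z}_t^{\top} \mathbf{U} \LQIS^{-1} \mathbf{U}^{\top} \mathbf{Z}_t \right) \]
is continuous on $\Orth$. The argument of each logarithm is strictly positive because $\LQIS^{-1}$ is positive definite (its diagonal entries $1/\hat{\delta}_j$ are strictly positive by the remark right after the definition of $\LQIS$) and each $\mathbf{Z}_t$ is a unit vector, so $\mathbf{U}^{\top}\mathbf{Z}_t \neq \mathbf{0}$. Since $\Orth$ is a compact subset of $\R^{p \times p}$ (bounded in Frobenius norm and closed, as the preimage of $\mathbf{I}$ under $\mathbf{U} \mapsto \mathbf{U}^{\top}\mathbf{U}$), Weierstrass's theorem yields a minimizer $\Vsol \in \Orth$.

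For the first-order condition, I would compute the Euclidean gradient of $L$ and then project onto the tangent space of the orthogonal group. Direct differentiation gives
\[ \nabla L(\mathbf{U}) = \frac{2}{n} \sum_{t=1}^{n} \frac{\mathbf{Z}_t \mathbf{Z}_t^{\top} \mathbf{U} \LQIS^{-1}}{\mathbf{Z}_t^{\top} \mathbf{U} \LQIS^{-1} \mathbf{U}^{\top} \mathbf{Z}_t} = \frac{2}{n}\, \f{\mathbf{U}}\, \mathbf{U} \LQIS^{-1}, \]
using the definition of $\f{\mathbf{U}}$ in \eqref{fdef}. The tangent space to $\Orth$ at $\mathbf{U}$ is $T_{\mathbf{U}}\Orth = \{\mathbf{U}\mathbf{S} : \mathbf{S}^{\top} = -\mathbf{S}\}$. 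At a minimizer $\Vsol$ the Euclidean gradient must be orthogonal (under the Frobenius inner product) to every tangent vector, i.e.\ $\Tr\!\bigl(\nabla L(\Vsol)^{\top}\Vsol\,\mathbf{S}\bigr)=0$ for all skew-symmetric $\mathbf{S}$.

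The standard fact that $\Tr(\mathbf{B}\mathbf{S})=0$ for every skew-symmetric $\mathbf{S}$ is equivalent to $\mathbf{B}$ being symmetric then shows that $\Vsol^{\top}\nabla L(\Vsol)$ must be symmetric. Plugging in the gradient formula gives
\[ \frac{2}{n}\,\Vsol^{\top} \f{\Vsol}\, \Vsol \LQIS^{-1} = \frac{2}{n}\,\bigl(\Vsol^{\top} \f{\Vsol}\, \Vsol \LQIS^{-1}\bigr)^{\!\top} = \frac{2}{n}\, \LQIS^{-1} \Vsol^{\top} \f{\Vsol}\, \Vsol, \]
where we use that $\LQIS^{-1}$ is diagonal (hence symmetric) and $\f{\Vsol}$ is symmetric. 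Cancelling the common factor yields precisely \eqref{critical_conditions}.

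The main subtlety is the passage from the unconstrained Euclidean gradient to the Riemannian first-order condition on $\Orth$; an equivalent, fully rigorous route is to attach a symmetric matrix of Lagrange multipliers $\boldsymbol{\Lambda}$ to the constraint $\mathbf{U}^{\top}\mathbf{U}=\mathbf{I}$, set $\nabla L(\Vsol) = \Vsol(\boldsymbol{\Lambda}+\boldsymbol{\Lambda}^{\top})$, and then eliminate the multiplier by multiplying on the left by $\Vsol^{\top}$ and symmetrizing. Beyond that, the computation is a routine application of matrix calculus, and differentiability is immediate because each denominator is bounded away from zero on the compact set $\Orth$.
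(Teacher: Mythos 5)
Your proof is correct and follows essentially the same route as the paper: existence via continuity of the objective on the compact group $\Orth$, and the critical-point condition via first-order stationarity on the orthogonal group, where the condition ``$\Vsol^{\top}\nabla L(\Vsol)$ symmetric'' that you derive from the tangent space $\{\Vsol\mathbf{S}:\mathbf{S}^{\top}=-\mathbf{S}\}$ is equivalent (by conjugation with the orthogonal $\Vsol$) to the condition $G\Vsol^{\top}-\Vsol G^{\top}=\mathbf{0}$ that the paper imports from Wen and Yin (2013). The only difference is that you derive this stationarity condition from scratch rather than citing it, which is if anything slightly more self-contained.
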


\begin{proof}


Since the orthogonal group is compact \citep[Ch. 3]{MahonyBook} and 
\begin{align}
    \mathbf{U} \mapsto f(\mathbf{U}):= \frac{1}{n}\sum_{t=1}^{n}  \ln\left( \mathbf{Z}_t^{\top} \mathbf{U} \LQIS^{-1}  \mathbf{U}^{\top} \mathbf{Z}_t  \right)~,
\end{align}
is continuous, $f(\mathbf{U})$ takes its minimal and maximal value on $\Orth$. Thus there exists $\Vsol \in \Orth$ such that $\Vsol$ minimizes $f$.

On the other hand, according to \cite{Wen2013}, if $\Vsol$ is a minimizer of $f$, it must satisfy the following first-order conditions: 
\begin{align*}
    &G\Vsol^{\top} - \Vsol G^{\top} = \mathbf{0}~,
\end{align*}
where $G$ is the unconstrained gradient of problem \eqref{problem},
\begin{align*}
    G:=\frac{1}{n}\sum_{t=1}^{n}  \frac{1}{ \mathbf{Z}_t^{\top}  \Vsol \LQIS^{-1}  \Vsol^{\top} \mathbf{Z}_t }  \mathbf{Z}_t \mathbf{Z}_t^{\top} \Vsol\LQIS^{-1}~.
\end{align*}
Thus
\begin{align*}
        &\sum_{t=1}^{n} \left( \frac{\mathbf{Z}_t \mathbf{Z}_t^{\top} \Vsol\LQIS^{-1} \Vsol^{\top}}{ \mathbf{Z}_t^{\top} \Vsol \LQIS^{-1}  \Vsol^{\top} \mathbf{Z}_t }  -  \frac{\Vsol  \LQIS^{-1} \Vsol^{\top}  \mathbf{Z}_t \mathbf{Z}_t^{\top}  }{ \mathbf{Z}_t^{\top} \Vsol \LQIS^{-1}  \Vsol^{\top} \mathbf{Z}_t } \right)   = \mathbf{0}\\
    &  \Vsol^{\top} \sum_{t=1}^{n}\frac{\mathbf{Z}_t \mathbf{Z}_t^{\top} }{ \mathbf{Z}_t^{\top} \Vsol \LQIS^{-1}  \Vsol  \LQIS^{-1}  \mathbf{Z}_t }\Vsol \LQIS^{-1} \\ &- \LQIS^{-1}  \Vsol^{\top} \sum_{t=1}^{n}\frac{\mathbf{Z}_t \mathbf{Z}_t^{\top} }{ \mathbf{Z}_t^{\top} \Vsol \LQIS^{-1}  \Vsol^{\top} \mathbf{Z}_t } \Vsol  = \mathbf{0}~.
\end{align*}
Hence, any minimizer $\Vsol \in \Orth$ meets \eqref{critical_conditions}.\\
\null \qed
\end{proof}


The necessary condition in \eqref{critical_conditions} is true in particular if $\Vsol$ diagonalizes $\f{\Vsol}$, or
\begin{align}\label{recursion}
      \Vsol \in  \mathcal{E} \left( \frac{1}{n} \sum_{t=1}^n \frac{\mathbf{Z}_t \mathbf{Z}_t^{\top} }{\mathbf{Z}_t^{\top} \Vsol \LQIS^{-1}  \Vsol^{\top} \mathbf{Z}_t} \right)~,
\end{align}
in analogy to \eqref{Tylerestimation}. Thus given $\LQIS$, we propose the following iterations
\begin{align}
    \Vit{\ell + 1}& \in \mathcal{E} \left( \frac{1}{n} \sum_{t=1}^n \frac{\mathbf{Z}_t \mathbf{Z}_t^{\top}  }{\mathbf{Z}_t^{\top}  \Vit{\ell} \LQIS^{-1}  (\Vit{\ell})^{\top}  \mathbf{Z}_t} \right)~, \label{iteration1}
\end{align}
starting with
\begin{align}
    \Vit{1} \in \mathcal{E} \left( \frac{1}{n} \sum_{t=1}^n \frac{\mathbf{Z}_t \mathbf{Z}_t^{\top}  }{\| \mathbf{Z}_t \|} \right)=\mathcal{E} \left( \frac{1}{n} \sum_{t=1}^n \mathbf{Z}_t \mathbf{Z}_t^{\top}   \right)~.
\end{align}

As noted above, this corresponds to the iterations in \cite[Algorithm
1]{ourclosestcompetitor}, for $\alpha=\infty$. We also note that the
time complexity in each iteration is the same as for the iterations of
the original Tyler's method in \eqref{Tyleriteration}, namely $O(np^2
+ p^3)$; for example, see \cite{timecomplexityTylerest}.

We now proceed by showing that any sequence generated by these iterations has a limit $\Vit{\infty}$ such that \eqref{critical_conditions} holds. To this end we adapt the approach taken in \cite{Wieselhottheory, majorizationpaper, existence_uniqueness_algorithms} and define the surrogate function
\begin{align}\label{surrogate}
   g(\mathbf{U}\mid \Vit{\ell}  ) \defeq &\frac{1}{n}\sum_{t=1}^{n}  \ln\left( \mathbf{Z}_t^{\top} \Vit{\ell} \LQIS^{-1}  (\Vit{\ell})^{\top} \mathbf{Z}_t  \right) \nonumber \\ &+\frac{1}{n}\sum_{t=1}^{n} \frac{\mathbf{Z}_t^{\top} \mathbf{U} \LQIS^{-1}  \mathbf{U}^{\top} \mathbf{Z}_t}{\mathbf{Z}_t^{\top} \Vit{\ell} \LQIS^{-1}  (\Vit{\ell})^{\top} \mathbf{Z}_t} - 1~.
\end{align}
Then for $f(\mathbf{U}) \defeq \frac{1}{n}\sum_{t=1}^{n}  \ln\left( \mathbf{Z}_t^{\top} \mathbf{U} \LQIS^{-1}  \mathbf{U}^{\top} \mathbf{Z}_t  \right)$,

\begin{restatable}{lemma}{lthree}\label{surrogateproperites}
The surrogate function $g$ satisfies:
\begin{align}
    f(\mathbf{U}) &\leq g(\mathbf{U}\mid \Vit{\ell}  ) \text{ for all } \mathbf{U}, \Vit{\ell} \in \Orth\\
    f(\Vit{\ell}) &= g(\Vit{\ell}\mid \Vit{\ell}  )~,
\end{align}
and for $\Vit{\ell + 1}$ as in \eqref{iteration1},
\begin{align}
    g(\Vit{\ell + 1}\mid \Vit{\ell}  ) \leq g(\mathbf{U}\mid \Vit{\ell}  ) \text{ for all } \mathbf{U} \in \Orth~.
\end{align}
\end{restatable}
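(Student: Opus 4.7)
The plan is to verify the three properties in turn, the first two by direct computation using the scalar inequality $\ln(x)\leq x-1$, and the third by reducing the problem to a trace minimization and invoking the rearrangement/von~Neumann trace inequality. Throughout, set $a_t(\mathbf{U})\defeq \mathbf{Z}_t^\top \mathbf{U}\LQIS^{-1}\mathbf{U}^\top \mathbf{Z}_t$, which is strictly positive since $\LQIS$ is positive definite. Observe also that, by the cyclicity of the trace, $\f{\Vit{\ell}}=\sum_{t=1}^n \mathbf{Z}_t\mathbf{Z}_t^\top / a_t(\Vit{\ell})$.

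For the first inequality, I would write
$$f(\mathbf{U}) - \frac{1}{n}\sum_{t=1}^n \ln a_t(\Vit{\ell}) = \frac{1}{n}\sum_{t=1}^n \ln\!\left(\frac{a_t(\mathbf{U})}{a_t(\Vit{\ell})}\right)$$
and apply $\ln(x)\leq x-1$ term-by-term with $x=a_t(\mathbf{U})/a_t(\Vit{\ell})>0$. The resulting upper bound is precisely $g(\mathbf{U}\mid\Vit{\ell})-\frac{1}{n}\sum_t \ln a_t(\Vit{\ell})$, so $f(\mathbf{U})\leq g(\mathbf{U}\mid\Vit{\ell})$ follows. For the second identity, I substitute $\mathbf{U}=\Vit{\ell}$: every ratio $a_t(\Vit{\ell})/a_t(\Vit{\ell})$ equals $1$, so the second summation in \eqref{surrogate} contributes $1-1=0$ and only the first summation, equal to $f(\Vit{\ell})$, survives.

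The main obstacle is the third property, that $\Vit{\ell+1}$ actually minimizes $g(\,\cdot\mid\Vit{\ell})$ over $\Orth$. Since the $\ln$-sum in $g$ does not involve $\mathbf{U}$, the problem reduces to minimizing the $\mathbf{U}$-dependent part. Using $\mathbf{Z}_t^\top B \mathbf{Z}_t = \Tr(B\mathbf{Z}_t\mathbf{Z}_t^\top)$ and cyclicity of the trace,
$$\sum_{t=1}^n \frac{\mathbf{Z}_t^\top \mathbf{U}\LQIS^{-1}\mathbf{U}^\top \mathbf{Z}_t}{a_t(\Vit{\ell})} \;=\; \Tr\!\left(\LQIS^{-1}\,\mathbf{U}^\top \f{\Vit{\ell}}\,\mathbf{U}\right)~.$$
Here $\LQIS^{-1}$ is a fixed diagonal matrix and, as $\mathbf{U}$ varies over $\Orth$, the matrix $\mathbf{U}^\top \f{\Vit{\ell}}\mathbf{U}$ ranges through all orthogonal conjugates of $\f{\Vit{\ell}}$, all of which share its spectrum. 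By the von~Neumann trace inequality (equivalently, the rearrangement inequality applied to the diagonal entries) the trace is minimized exactly when $\mathbf{U}^\top \f{\Vit{\ell}}\mathbf{U}$ is diagonal with entries in the order opposite to $\diag(\LQIS^{-1})$. Given the ascending-order convention built into the operator $\mathcal{E}$ together with the matching ordering convention for $\LQIS$ (so that $\diag(\LQIS^{-1})$ is descending whenever $\diag(\LQIS)$ is ascending), this matching condition is realised precisely by the choice $\Vit{\ell+1}\in \mathcal{E}(\f{\Vit{\ell}})$ from \eqref{iteration1}, completing the argument.
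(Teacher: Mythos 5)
Your proposal is correct and follows essentially the same route as the paper: the term-by-term logarithm inequality $\ln(x)\le x-1$ for the majorization property, direct substitution for the touching condition, and reduction of the third claim to minimizing $\Tr\bigl(\LQIS^{-1}\,\mathbf{U}^{\top}\f{\Vit{\ell}}\,\mathbf{U}\bigr)$ over $\Orth$. The only difference is that where the paper simply cites \cite{comfortNL} for the fact that this trace is minimized by diagonalizing $\f{\Vit{\ell}}$ with suitably ordered eigenvalues, you supply the justification explicitly via the von~Neumann/rearrangement inequality, which is a welcome addition rather than a deviation.
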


\begin{proof}
  The first inequality follows from the fact that, for~any \mbox{$a > 0$},
  $\log(x) \leq \log(a)
+ (\frac{x}{a} - 1)$ (\cite{Wieselhottheory}), whereas the second
equality is trivial. For the last claim, we can write
\begin{align*}
    &\argmin_{\mathbf{U} \in \Orth} g(\mathbf{U}\mid \Vit{\ell}  )\\
    &=  \argmin_{\mathbf{U} \in \Orth} \mbox{Tr}\left(\frac{1}{n}\sum_{t=1}^{n} \frac{\mathbf{Z}_t \mathbf{Z}_t^{\top}}{\mathbf{Z}_t^{\top} \Vit{\ell} \LQIS^{-1}  (\Vit{\ell})^{\top} \mathbf{Z}_t}  \mathbf{U} \LQIS^{-1}  \mathbf{U}^{\top}\right)~.
\end{align*}
Since we assume $\LQIS$ has ordered values, this is globally minimized
when $\mathbf{U}$ is chosen to diagonalize $\frac{1}{n}\sum_{t=1}^{n}
\frac{\mathbf{Z}_t \mathbf{Z}_t^{\top}}{\mathbf{Z}_t^{\top} \Vit{\ell}
  \LQIS^{-1}  (\Vit{\ell})^{\top} \mathbf{Z}_t}$; for example, see \cite{comfortNL}.\\
\null \qed
\end{proof}

Define now the set of critical points as $\Vset \subset \Orth$, that is,
\[
\Vset:=\{\Vsol \in \Orth \text{ such that } \eqref{critical_conditions} \text{ holds}\}~,
\]
and let for all $\mathbf{U} \in \Orth$,
\begin{align*}
    d(\mathbf{U},\Vset) := \inf_{\Vsol \in \Vset} \|\Vsol-\mathbf{U} \|_{F}~,
\end{align*}
as in \cite{majorizationpaper}. Using Lemma \ref{surrogateproperites} the following convergence result can be obtained.

\begin{restatable}{theorem}{thmone}\label{awesometheorem}
For any sequence $(\Vit{\ell})_{\ell=1}^{\infty}$ generated by the above iterations,
\begin{align}\label{monotonicity}
    f(\Vit{\ell + 1}) \leq f(\Vit{\ell})~, \text{ for all } \ell,
\end{align}
and 
\begin{align}\label{convergence}
 \lim_{\ell \to \infty }d(\Vit{\ell},\Vset) = 0~.
\end{align}
\end{restatable}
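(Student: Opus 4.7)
The strategy is the standard majorization-minimization (MM) convergence argument, leveraging the three properties of $g$ established in Lemma \ref{surrogateproperites}. The monotonicity claim \eqref{monotonicity} is immediate: by construction $\Vit{\ell+1}$ minimizes $g(\cdot\mid\Vit{\ell})$ over $\Orth$, so chaining the three properties gives
\begin{align*}
f(\Vit{\ell+1}) \;\leq\; g(\Vit{\ell+1}\mid \Vit{\ell}) \;\leq\; g(\Vit{\ell}\mid \Vit{\ell}) \;=\; f(\Vit{\ell}).
\end{align*}
Since $f$ is continuous on the compact set $\Orth$, the decreasing sequence $\{f(\Vit{\ell})\}$ is bounded below and thus converges to some $f^{*}$.

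For \eqref{convergence}, I would show that every subsequential limit of $(\Vit{\ell})$ lies in $\Vset$; compactness of $\Orth$ then gives $d(\Vit{\ell},\Vset)\to 0$. Pick any convergent subsequence $\Vit{\ell_k}\to \Vit{\infty}$, and by compactness extract a further subsequence (not relabeled) such that $\Vit{\ell_k+1}\to \mathbf{W}$ for some $\mathbf{W}\in\Orth$. Jointly continuous dependence of $g(\cdot\mid\cdot)$ on both arguments, together with the minimizing property $g(\Vit{\ell_k+1}\mid\Vit{\ell_k})\leq g(\mathbf{U}\mid \Vit{\ell_k})$ for every $\mathbf{U}\in\Orth$, yields in the limit that $\mathbf{W}$ minimizes $g(\cdot\mid \Vit{\infty})$. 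On the other hand, $f(\Vit{\infty})=f(\mathbf{W})=f^{*}$, so sandwiching gives
\begin{align*}
f(\mathbf{W}) \;\leq\; g(\mathbf{W}\mid \Vit{\infty}) \;\leq\; g(\Vit{\infty}\mid \Vit{\infty}) \;=\; f(\Vit{\infty}) \;=\; f(\mathbf{W}),
\end{align*}
which forces $g(\Vit{\infty}\mid \Vit{\infty})=\min_{\mathbf{U}\in\Orth} g(\mathbf{U}\mid \Vit{\infty})$. Hence $\Vit{\infty}$ itself is a minimizer of the smooth function $g(\cdot\mid \Vit{\infty})$ on the compact manifold $\Orth$.

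It remains to translate this into condition \eqref{critical_conditions} for $f$. Since $\Orth$ is a smooth manifold, a minimizer of $g(\cdot\mid \Vit{\infty})$ satisfies the same first-order optimality condition $G\mathbf{U}^{\top}-\mathbf{U}G^{\top}=\mathbf{0}$ used in the proof of Lemma \ref{existence}, where now $G$ is the unconstrained gradient of $\mathbf{U}\mapsto g(\mathbf{U}\mid\Vit{\infty})$ evaluated at $\Vit{\infty}$. Because the weights $\mathbf{Z}_t^{\top}\Vit{\infty}\LQIS^{-1}(\Vit{\infty})^{\top}\mathbf{Z}_t$ appearing in $g(\cdot\mid \Vit{\infty})$ coincide with those appearing in the gradient of $f$ at $\Vit{\infty}$, this gradient is exactly $\tfrac{1}{n}\f{\Vit{\infty}}\Vit{\infty}\LQIS^{-1}$; rearranging $G\Vit{\infty}^{\top}-\Vit{\infty}G^{\top}=\mathbf{0}$ as in Lemma \ref{existence} yields \eqref{critical_conditions}, so $\Vit{\infty}\in\Vset$. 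Since every subsequential limit belongs to $\Vset$ and $\Orth$ is compact, $d(\Vit{\ell},\Vset)\to 0$.

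The main subtlety, and where I would be most careful, is the passage from ``$\mathbf{W}$ minimizes $g(\cdot\mid\Vit{\infty})$'' to ``$\Vit{\infty}$ minimizes $g(\cdot\mid\Vit{\infty})$'': this is precisely what makes the MM argument work here, and it relies on the already-established convergence of $f(\Vit{\ell})$ together with the exact equality $g(\Vit{\infty}\mid\Vit{\infty})=f(\Vit{\infty})$. A secondary point worth noting is that the iteration $\Vit{\ell+1}\in\mathcal{E}(\cdot)$ is set-valued whenever $\f{\Vit{\ell}}$ has repeated eigenvalues, but since the argument above applies to any sequence drawn from these sets this causes no issue for the convergence of $d(\Vit{\ell},\Vset)$.
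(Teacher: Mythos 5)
Your proof is correct and follows essentially the same majorization-minimization argument as the paper: the same chain of surrogate inequalities for monotonicity, compactness plus joint continuity of $g$ to show every subsequential limit minimizes $g(\cdot\mid\Vit{\infty})$, and the identification of the first-order conditions of $g(\cdot\mid\Vit{\infty})$ at $\Vit{\infty}$ with those of $f$ to conclude $\Vit{\infty}\in\Vset$. The only cosmetic difference is that you extract a further subsequence for $\Vit{\ell_k+1}$ and close the sandwich via the limit value $f^{*}$, whereas the paper chains $g(\Vit{\ell_k+1}\mid\Vit{\ell_k})\geq f(\Vit{\ell_{k+1}})=g(\Vit{\ell_{k+1}}\mid\Vit{\ell_{k+1}})$ using $\ell_{k+1}\geq\ell_k+1$; both are valid.
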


\begin{proof}
The proof closely follows the argument in \cite[Theorem 1, Corollary 1]{majorizationpaper}. Using Lemma \ref{surrogateproperites}, we have that for all $\ell$,
\begin{align*}
    f(\Vit{\ell + 1}) \leq g(\Vit{\ell + 1} \mid \Vit{\ell}) \leq  g(\Vit{\ell} \mid \Vit{\ell}) =f(\Vit{\ell})~,
\end{align*}
proving the first part. For the second, since the orthogonal group $\Orth$ is compact, there exists a subsequence $(\Vit{\ell_k})_{k=1}^{\infty}$ of  $(\Vit{\ell})_{\ell=1}^{\infty}$ that converges to $\Vit{\infty} \in \Orth$. Additionally, for all $\mathbf{U} \in \Orth$,
\begin{align*}
    &g(\mathbf{U} \mid \Vit{\ell_k}) \geq g(\Vit{\ell_{k} + 1 } \mid \Vit{\ell_k}) \geq f(\Vit{\ell_{k} + 1 })\\
    &\geq f(\Vit{\ell_{k+1}}) =g(\Vit{\ell_{k+1}} \mid \Vit{\ell_{k+1}})~,
\end{align*}
since by the properties of subsequences $\ell_{k+1} \geq \ell_k +1 $. Letting $k \to \infty$, thanks to the joint continuity of $(\mathbf{U}_1, \mathbf{U}_2) \mapsto g(\mathbf{U}_1 \mid \mathbf{U}_2)$, this implies
\begin{align*}
   g(\mathbf{U} \mid \Vit{\infty}) \geq   g(\Vit{\infty} \mid \Vit{\infty})~, \text{for all } \mathbf{U} \in \Orth .
\end{align*}
Thus $\Vit{\infty}$ is the global minimizer of the function $\mathbf{U} \mapsto g(\mathbf{U} \mid \Vit{\infty})$. In particular, the first-order conditions must hold: Thus
\begin{align*}
     &G(\Vit{\infty})^{\top} - \Vit{\infty} G^{\top} = \mathbf{0}~,
\end{align*}
where $G$ is the unconstrained derivative at $\Vit{\infty}$:
\begin{align*}
     G:=\frac{1}{n}\sum_{t=1}^{n}  \frac{1}{ \mathbf{Z}_t^{\top}  \Vit{\infty} \LQIS^{-1}  (\Vit{\infty})^{\top} \mathbf{Z}_t }  \mathbf{Z}_t \mathbf{Z}_t^{\top} \Vit{\infty}\LQIS^{-1}~.
\end{align*}
Thus it holds that
\begin{align*}
&(\Vit{\infty})^{\top} \sum_{t=1}^{n}\frac{\mathbf{Z}_t \mathbf{Z}_t^{\top} }{ \mathbf{Z}_t^{\top} (\Vit{\infty})^{\top} \LQIS^{-1}  \Vit{\infty}  \LQIS^{-1}  \mathbf{Z}_t }\Vit{\infty} \LQIS^{-1}  \\ &- \LQIS^{-1}  (\Vit{\infty})^{\top} \sum_{t=1}^{n}\frac{\mathbf{Z}_t \mathbf{Z}_t^{\top} }{ \mathbf{Z}_t^{\top} \Vit{\infty} \LQIS^{-1}  (\Vit{\infty})^{\top} \mathbf{Z}_t } \Vit{\infty}  = \mathbf{0}~,
\end{align*}
which corresponds to the desired first-order conditions for the minimization of $f$ and thus $\Vit{\infty}  \in \Vset$, or $d(\Vit{\infty}, \Vset)=0$.

Repeating this argument, it follows that any subsequence of $(\Vit{\ell})_{\ell=1}^{\infty}$ has a further subsequence converging to some $\Vit{\infty}$ (depending on the subsequence) with $d(\Vit{\infty}, \Vset)=0$. Now assume the overall sequence does not converge to a point in $\Vset$. Then there is a subsequence $(\Vit{\ell_k})_{k=1}^{\infty}$ such that for all $k$
\begin{align*}
    d(\Vit{\ell_k}, \Vset) \geq \varepsilon ~,
\end{align*}
for some $\varepsilon>0$. But then this would be true also for any subsequence, a contradiction.\\
\null \qed
\end{proof}

Thus, as $\ell \to \infty$, $\Vit{\ell}$ gets arbitrary close to a critical point. This leads to the following convergence criterion:
\begin{align*}
    &\|(\Vit{\ell-1})^{\top} \f{\Vit{\ell-1}} \Vit{\ell-1} \LQIS^{-1} \\ &- \LQIS^{-1}  (\Vit{\ell})^{\top} \f{\Vit{\ell}} \Vit{\ell}\|_F\leq \epsilon~,
\end{align*}
where $\epsilon>0$ is some convergence tolerance. This criterion is
used in Algorithm \ref{VIteration} and we set $\epsilon=10^{-10}$ in Sections \ref{sec:mc} and~\ref{sec:empirics}.

Although \mbox{R-NL} is no longer in the same class of estimators as nonlinear
shrinkage, namely the class of \cite{stein:1975,stein:1986},
an interesting question is whether it is still \emph{rotation-equivariant}. An estimator $\mathbf{\hat{H}}$ applied to $\mathbb{Y}=\{\mathbf{Y}_1, \ldots, \mathbf{Y}_T\}$ is rotation-equivariant if, for any rotation $\mathbf{R}$ and rotated data $\mathbf{\bar{Y}}_t\defeq \mathbf{R} \mathbf{Y}_t$, $t=1,\ldots,n$, the estimate of the rotated data, $\mathbf{\hat{H}}_{\mathbf{R}}$, satisfies
\begin{align}\label{H_R}
    \mathbf{\hat{H}}_{\mathbf{R}}= \mathbf{R} \mathbf{\hat{H}} \mathbf{R}^{\top}~.
\end{align}
This is true for any estimator in the class of
\cite{stein:1975,stein:1986} and, therefore, in particular for
nonlinear shrinkage.
We now show that this is true for \mbox{R-NL} as well, using the following lemma:

\begin{restatable}{lemma}{lfour}\label{rotationeq}
Let $\mathbf{R}$ be an arbitrary rotation, $\mathbf{\bar{Z}}_t\defeq \mathbf{R} \mathbf{Z}_t$ and $\Vit{\ell}_r$ be the $\ell$th iteration of Algorithm \ref{VIteration} applied to $\{\mathbf{\bar{Z}}_1, \ldots, \mathbf{\bar{Z}}_T \}$. Then
\begin{align}
   &\mathcal{E} \left( \frac{1}{n} \sum_{t=1}^n \frac{\mathbf{\bar{Z}}_t \mathbf{\bar{Z}}_t^{\top}  }{\mathbf{\bar{Z}}_t^{\top}  \Vit{\ell}_r \LQIS^{-1}  (\Vit{\ell}_r)^{\top}  \mathbf{\bar{Z}}_t} \right)=\left\{\mathbf{R} \Vit{\ell +1 }:\right. \nonumber \\
   &\left. \Vit{\ell + 1} \in \mathcal{E} \left( \frac{1}{n} \sum_{t=1}^n \frac{\mathbf{Z}_t \mathbf{Z}_t^{\top}  }{\mathbf{Z}_t^{\top}  \Vit{\ell} \LQIS^{-1}  (\Vit{\ell})^{\top}  \mathbf{Z}_t} \right) \right\}~.
    \label{rotationsolution}
\end{align}
\end{restatable}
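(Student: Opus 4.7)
The plan is a direct substitution argument exploiting how the data rotation $\mathbf{R}$ commutes with the iteration \eqref{iteration1} via the orthogonality of $\mathbf{R}$. The key algebraic observation is that the quadratic-form denominator $\mathbf{\bar{Z}}_t^{\top}\Vit{\ell}_r\LQIS^{-1}(\Vit{\ell}_r)^{\top}\mathbf{\bar{Z}}_t$ is invariant under the joint transformation $\mathbf{\bar{Z}}_t = \mathbf{R}\mathbf{Z}_t$ and $\Vit{\ell}_r = \mathbf{R}\Vit{\ell}$, because the intermediate factors $\mathbf{R}^{\top}\mathbf{R} = \mathbf{I}$ collapse out, while the outer product $\mathbf{\bar{Z}}_t\mathbf{\bar{Z}}_t^{\top}$ picks up $\mathbf{R}$ on the left and $\mathbf{R}^{\top}$ on the right.

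First, I would invoke the natural inductive hypothesis $\Vit{\ell}_r = \mathbf{R}\Vit{\ell}$ for a compatible choice of the set-valued iterates, with the base case $\ell = 1$ handled separately: since $\Vit{1} \in \mathcal{E}(\frac{1}{n}\sum_t \mathbf{Z}_t\mathbf{Z}_t^{\top})$ and $\frac{1}{n}\sum_t \mathbf{\bar{Z}}_t\mathbf{\bar{Z}}_t^{\top} = \mathbf{R}\bigl(\frac{1}{n}\sum_t \mathbf{Z}_t\mathbf{Z}_t^{\top}\bigr)\mathbf{R}^{\top}$, the base case reduces to the same structural identity used in the inductive step. Substituting $\mathbf{\bar{Z}}_t = \mathbf{R}\mathbf{Z}_t$ and $\Vit{\ell}_r = \mathbf{R}\Vit{\ell}$ in the matrix inside $\mathcal{E}(\cdot)$ on the left-hand side of \eqref{rotationsolution}, the denominators reduce to their unrotated form $\mathbf{Z}_t^{\top}\Vit{\ell}\LQIS^{-1}(\Vit{\ell})^{\top}\mathbf{Z}_t$, and the overall matrix becomes $\mathbf{R}\mathbf{M}\mathbf{R}^{\top}$, where $\mathbf{M}$ is precisely the matrix inside $\mathcal{E}(\cdot)$ on the right-hand side.

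Second, I would establish the structural identity $\mathcal{E}(\mathbf{R}\mathbf{M}\mathbf{R}^{\top}) = \{\mathbf{R}\mathbf{U} : \mathbf{U} \in \mathcal{E}(\mathbf{M})\}$. If $\mathbf{M} = \mathbf{U}\boldsymbol{\Lambda}\mathbf{U}^{\top}$ is an eigendecomposition with eigenvalues ordered smallest to largest, then $\mathbf{R}\mathbf{M}\mathbf{R}^{\top} = (\mathbf{R}\mathbf{U})\boldsymbol{\Lambda}(\mathbf{R}\mathbf{U})^{\top}$ is an eigendecomposition with the same ordered eigenvalues, and $\mathbf{R}\mathbf{U}$ is orthogonal since $\mathbf{R}$ and $\mathbf{U}$ are. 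The reverse inclusion follows by swapping the roles of $\mathbf{R}$ and $\mathbf{R}^{\top}$. Combining the two steps yields \eqref{rotationsolution}.

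The main obstacle is handling the set-valued nature of $\mathcal{E}(\cdot)$. When $\mathbf{M}$ has repeated eigenvalues, any orthonormal basis of the corresponding eigenspace gives a valid eigenvector matrix, so $\mathcal{E}(\mathbf{M})$ is a full orbit under block-diagonal orthogonal transformations within eigenspaces. The lemma is phrased as a set equality precisely to accommodate this, and one must verify that left-multiplication by $\mathbf{R}$ maps this orbit bijectively onto its counterpart for $\mathbf{R}\mathbf{M}\mathbf{R}^{\top}$, rather than argue through a single representative.
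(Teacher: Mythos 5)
Your proposal is correct and follows essentially the same route as the paper: induction on $\ell$ with the hypothesis $\Vit{\ell}_r=\mathbf{R}\Vit{\ell}$, substitution showing the denominators are invariant while the matrix inside $\mathcal{E}(\cdot)$ becomes $\mathbf{R}\mathbf{M}\mathbf{R}^{\top}$, and the identity $\mathcal{E}(\mathbf{R}\mathbf{M}\mathbf{R}^{\top})=\mathbf{R}\,\mathcal{E}(\mathbf{M})$. Your explicit treatment of the set-valued nature of $\mathcal{E}$ (both inclusions, repeated eigenvalues) is a welcome elaboration of a step the paper states without detail.
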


\begin{proof}

It clearly holds that \eqref{rotationsolution} is true for $\ell=0$. Assume \eqref{rotationsolution} holds for $\ell$, we show that it holds for $\ell + 1$: By assumption we can write $\Vit{\ell}_r=\mathbf{R} \Vit{\ell}$. Thus 
\begin{align*}
  &\mathcal{E} \left( \frac{1}{n} \sum_{t=1}^n \frac{\mathbf{\bar{Z}}_t \mathbf{\bar{Z}}_t^{\top}  }{\mathbf{\bar{Z}}_t^{\top}  \Vit{\ell}_r \LQIS^{-1}  (\Vit{\ell}_r)^{\top}  \mathbf{\bar{Z}}_t} \right)\\
  &=\mathcal{E} \left(\mathbf{R} \frac{1}{n} \sum_{t=1}^n \frac{\mathbf{Z}_t \mathbf{Z}_t^{\top}  }{\mathbf{Z}_t^{\top}  \Vit{\ell} \LQIS^{-1}  ( \Vit{\ell})^{\top}  \mathbf{Z}_t} \mathbf{R}^{\top} \right)\\
  &=\mathbf{R}\mathcal{E} \left( \frac{1}{n} \sum_{t=1}^n \frac{\mathbf{Z}_t \mathbf{Z}_t^{\top}  }{\mathbf{Z}_t^{\top}  \Vit{\ell} \LQIS^{-1}  ( \Vit{\ell})^{\top}  \mathbf{Z}_t}  \right)~,
\end{align*}
 and thus \eqref{rotationsolution} hold true.\\
\null \qed
\end{proof}

We note that the rotation of the original data in $\mathbb{Y}$ corresponds to a rotation of $\mathbb{Z}$, since for all $t$, $\mathbf{R} \mathbf{Y}_t/\|\mathbf{R} \mathbf{Y}_t \|=\mathbf{R} \mathbf{Y}_t/\|\mathbf{Y}_t \|=\mathbf{R} \mathbf{Z}_t$. Thus if $(\Vit{\ell})_{\ell=1}^{\infty}$ is a sequence generated by Algorithm \ref{VIteration} for the data $\{\mathbf{Y}_1, \ldots, \mathbf{Y}_T \}$, then $(\mathbf{R}\Vit{\ell})_{\ell=1}^{\infty}$ is a sequence generated for the rotated data. Consequently, the corresponding estimate of $\mathbf{H}_{\mathbf{R}}$ for the rotated data will be of the form \eqref{H_R}.

\subsection{Uniqueness}\label{practiceimprovsec}


The matrix $\Vsol$ in \eqref{recursion} (and consequently in
\eqref{critical_conditions}) is not unique in general. However, we are
ultimately not interested in $\Vsol$, but in $\Vsol \LQIS
\Vsol^{\top}$. A natural question is thus whether $\Vsol \LQIS
\Vsol^{\top}$ is unique even if $\Vsol$ is not. This turns out to be true with probability one, as we detail now. Let in the
following $\EigFVhat$ be the diagonal matrix of (ordered) eigenvalues
of $F \bigl (\Vsol \bigr )$. It has the form $\EigFVhat=\EigFVhat_{+}$, if $p < n$ and 
\begin{align*}
    \EigFVhat= \begin{pmatrix} \mathbf{0}_{p-n+1, p-n+1} & \mathbf{0}_{p-n+1, n-1}\\
\mathbf{0}_{n-1, p-n+1} & \EigFVhat_{+}\end{pmatrix} , & \text{ if } p \geq n~,
\end{align*}
where $\EigFVhat_{+}$ is a diagonal matrix with the largest $\min(n-1,p)$ eigenvalues and $\mathbf{0}_{n, p}$ is an $n \times p$ matrix of zeros. We denote the diagonal elements of $\EigFVhat$ as $\hat{\lambda}_1, \ldots \hat{\lambda}_p$. Define similarly the matrix of eigenvalues of $\f{\Vit{\ell-1}}$ as $\EigFl{\ell}$, with elements $\hat{\lambda}_1^{[\ell]}, \ldots \hat{\lambda}_p^{[\ell]}$, and $\EigFVhat_{+}^{[\ell]}$ as the largest $\min(n-1,p)$ eigenvalues of $\EigFl{\ell}$.

\begin{restatable}{lemma}{ltwo}\label{unique}
Assume that whenever $\hat{\lambda}_i=\hat{\lambda}_j$, also $\hat{\delta}_i=\hat{\delta}_j$, for $j,i \in \{1,\ldots,p \}$. Then if $\Vsol_1$ and $\Vsol_2$ meet \eqref{recursion},
\[
\Vsol_1\LQIS (\Vsol_1)^{\top} =\Vsol_2 \LQIS (\Vsol_2)^{\top}.
\]
\end{restatable}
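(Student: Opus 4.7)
My plan is to exploit two structural facts: first, that $F(\mathbf{U})$ in \eqref{fdef} depends on $\mathbf{U}$ only through the product $\mathbf{U}\LQIS^{-1}\mathbf{U}^\top$, hence effectively only through the dispersion estimate $\mathbf{H}=\mathbf{U}\LQIS\mathbf{U}^\top$; and second, that the non-uniqueness of $\mathcal{E}(\mathbf{A})$ is confined to orthogonal transformations within eigenspaces of repeated eigenvalues. Set $\mathbf{H}_i\defeq\Vsol_i\LQIS\Vsol_i^\top$. The condition \eqref{recursion} then reads $F(\mathbf{H}_i)=\Vsol_i\EigFVhat_i\Vsol_i^\top$ with $\EigFVhat_i$ diagonal, its entries $\hat{\lambda}_1^{(i)}\leq\cdots\leq\hat{\lambda}_p^{(i)}$ in ascending order, so that at each critical point $\Vsol_i$ simultaneously provides ordered eigendecompositions of both $\mathbf{H}_i$ and $F(\mathbf{H}_i)$.

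The core step is to characterize $\mathcal{E}(F(\mathbf{H}))$ and show that the slack is absorbed by $\LQIS$. Group the indices into blocks $B_1,\dots,B_K$ on which $\EigFVhat$ is constant. Any element of $\mathcal{E}(F(\mathbf{H}))$ can be written as $\Vsol\cdot Q$, where $Q\in\Orth$ is block-diagonal in the partition $\{B_k\}$: inside each block $Q$ is an arbitrary orthogonal matrix (including sign flips), since these are exactly the orthogonal changes of basis preserving the diagonal form $\EigFVhat$ with its prescribed ordering. The hypothesis of the lemma states that whenever $\hat{\lambda}_i=\hat{\lambda}_j$ we also have $\hat{\delta}_i=\hat{\delta}_j$, so the partition $\{B_k\}$ refines the partition induced by $\LQIS$, which in turn means $\LQIS$ is block-scalar on each $B_k$. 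Hence $Q\LQIS=\LQIS Q$, and therefore
\begin{equation*}
(\Vsol Q)\LQIS(\Vsol Q)^\top=\Vsol\,Q\LQIS Q^\top\Vsol^\top=\Vsol\LQIS\Vsol^\top.
\end{equation*}
This shows that the map $\Vsol\mapsto\Vsol\LQIS\Vsol^\top$ is constant along every fiber of $\mathcal{E}(F(\mathbf{H}))$, and in particular that if $\Vsol_1,\Vsol_2\in\mathcal{E}(F(\mathbf{H}))$ for a common $\mathbf{H}$, then $\Vsol_1\LQIS\Vsol_1^\top=\Vsol_2\LQIS\Vsol_2^\top=\mathbf{H}$.

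To finish, I need to rule out the case that $\Vsol_1$ and $\Vsol_2$ lie above \emph{different} $\mathbf{H}_i$. The plan here is to use the key observation from the first paragraph: since $\mathbf{H}_i$ determines $F(\mathbf{H}_i)$ completely and \eqref{recursion} forces $\Vsol_i$ to jointly diagonalize $\mathbf{H}_i$ and $F(\mathbf{H}_i)$ in the same ascending order, the pair $(\LQIS,\EigFVhat_i)$ is placed on the \emph{same} eigenbasis $\Vsol_i$. Any such $\Vsol_i$ also lies in $\mathcal{E}(F(\mathbf{H}_i))$, so by the preceding paragraph the common orbit $\{\Vsol Q:Q\text{ block-orthogonal}\}$ determines $\mathbf{H}_i$ uniquely as $\Vsol_i\LQIS\Vsol_i^\top$. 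Combined with the fact that the shared-eigenbasis condition pins down the mutual arrangement of $\EigFVhat_i$ and $\LQIS$, the two fixed points $\Vsol_1$ and $\Vsol_2$ must induce identical $\mathbf{H}$.

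The main obstacle I anticipate is exactly this last bridging step: within-orbit uniqueness (the core step) is a clean linear-algebra calculation, but extending it to across-orbit uniqueness of the fixed point depends on showing that the coupling between $\EigFVhat$ and $\LQIS$ through the common ordered basis $\Vsol$ pins the orbit down. I expect this to go through under the stated hypothesis, since ties in $\EigFVhat$ being forced to coincide with ties in $\LQIS$ is precisely what prevents two distinct orbits from being joined by a permutation that swaps eigenspaces of $F(\mathbf{H})$ while leaving $\LQIS$-eigenvalues distinct — the only mechanism, in this parametrization, by which two critical points could otherwise give non-equal dispersion estimates.
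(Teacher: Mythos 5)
Your core step is correct and is essentially the paper's own argument in a cleaner form: the paper works through the same block decomposition explicitly, writing $\Vsol_{12}=\Vsol_{22}\mathbf{A}$ with $\mathbf{A}$ orthogonal on each repeated-eigenvalue block and using that $\LQIS$ restricted to such a block is a scalar multiple of the identity, which is exactly your observation that the connecting matrix $Q$ is block-orthogonal and commutes with the block-scalar $\LQIS$. Under the reading that $\Vsol_1$ and $\Vsol_2$ are two eigenvector matrices of the \emph{same} matrix $\f{\Vsol}$ --- which is how the paper's proof treats them (it speaks of the single space $\{\mathbf{u}:(\f{\Vsol}-\hat{\lambda}_i\mathbf{I})\mathbf{u}=\mathbf{0}\}$) and how the lemma is actually invoked later, inside the proof of Lemma \ref{uniqueoverell} --- your first two paragraphs already constitute a complete proof.

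The gap is in your third paragraph. The claim that two fixed points lying over \emph{different} matrices $\mathbf{H}_1\neq\mathbf{H}_2$ cannot occur is never actually argued: you observe that each orbit determines its own $\mathbf{H}_i$ (which is immediate) and then assert that the ``shared-eigenbasis condition pins down the mutual arrangement,'' which restates the desired conclusion rather than deriving it. Nothing in the hypothesis $\hat{\lambda}_i=\hat{\lambda}_j\Rightarrow\hat{\delta}_i=\hat{\delta}_j$ rules out two genuinely distinct critical points of the nonconvex objective \eqref{problem} on the compact manifold $\Orth$; the condition \eqref{recursion} is only a sufficient condition for the first-order condition \eqref{critical_conditions}, and Theorem \ref{awesometheorem} guarantees convergence to the \emph{set} $\Vset$ of critical points, not to a singleton. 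To be fair, the paper's proof does not address this either --- it silently restricts to the within-orbit case --- so you have correctly identified a real ambiguity in the statement; but your proposed bridge does not close it, and global uniqueness of the fixed point up to the equivalence $\eqiv$ would require a separate argument that neither you nor the paper supplies.
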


\begin{proof}
If $\hat{\lambda}_i$ is unique, the corresponding eigenvector $\mathbf{v}_i$ is the basis of the one-dimensional space $\{\mathbf{u}: (F(\Vsol) - \hat{\lambda}_i \mathbf{I})\mathbf{u}=\mathbf{0}  \}$. As such if $\mathbf{v}_i^1$, $\mathbf{v}_i^2$ are the $i$th column of $\Vsol_1 $ and $\Vsol_2$ respectively, it must hold that $\mathbf{v}_i^1=\mathbf{v}_i^2$ or $\mathbf{v}_i^1=-\mathbf{v}_i^2$. However as $\Vsol_1 \LQIS \Vsol_1^{\top}= \sum_{i=1}^{p} \hat{\delta}_i \mathbf{v}_i^1 (\mathbf{v}_i^1)^{\top}$ this does not affect the overall matrix. This holds true whether or not $\hat{\delta}_i$ in $\LQIS$ is unique.

Now assume there is $\hat{\lambda}_i$ with multiplicity $p_0$, whereas all other $\hat{\lambda}_j$ are unique. By assumption, $\LQIS$ mimics this pattern and we can reorder their values such that: 
\[
\EigFVhat=\begin{pmatrix} \EigFVhat_1 & \mathbf{0}_{p-p_0, p_0}\\
\mathbf{0}_{p_0, p- p_0} & \EigFVhat_2\end{pmatrix} \text{ and }
\LQIS=\begin{pmatrix} \LQIS^1 & \mathbf{0}_{p-p_0, p_0}\\
\mathbf{0}_{p_0, p- p_0} & \LQIS^2\end{pmatrix}~,
\]
where $\EigFVhat_1$ contains unique ordered values and $\EigFVhat_2$ of size \mbox{$p_0 \times p_0$} contains one value with multiplicity. By assumption, $\LQIS^1$ might have values with multiplicity larger one, but $\LQIS^2$ also contains only copies of one value. We similarly decompose the newly ordered $\Vsol_{1}, \Vsol_{2}$:
\[
\Vsol_{1}=[\Vsol_{11}, \Vsol_{12}] \text{ and } \Vsol_{2}=[\Vsol_{21}, \Vsol_{22}]~.
\]
The columns of $ \Vsol_{12},  \Vsol_{22}$ now form an orthogonal basis of the $p_0$-dimensional eigenvectorspace. As such, we can express each column of $\Vsol_{12}$ as a linear combination of columns in $ \Vsol_{22}$, that is there exists $\mathbf{A} \in \R^{p_0 \times p_0}$, such that $\Vsol_{12}=\Vsol_{22} \mathbf{A}$. Moreover
\begin{align*}
    \mathbf{I}=\Vsol_{12}^{\top} \Vsol_{12} = \mathbf{A}^{\top} \Vsol_{22}^\top   \Vsol_{22} \mathbf{A}=\mathbf{A}^{\top}\mathbf{A}~.
\end{align*}
Thus the columns of $\mathbf{A}$ are orthogonal and since it is square, it has full rank and $\mathbf{A}\mathbf{A}^{\top}=\mathbf{I}$ holds as well. Finally,
\begin{align*}
\Vsol_1 \LQIS \Vsol_1^{\top} &= \Vsol_{11} \LQIS^1 \Vsol_{11}
^{\top} + \Vsol_{12} \LQIS^2 \Vsol_{12}^{\top} \\
&=     \Vsol_{21} \LQIS^1 \Vsol_{21}^{\top} + \hat{\delta}_i \Vsol_{12} \Vsol_{12}^{\top}\\
&=     \Vsol_{21} \LQIS^1 \Vsol_{21}^{\top} + \hat{\delta}_i \Vsol_{22} \mathbf{A} \mathbf{A}^{\top} \Vsol_{22}^{\top}\\
&=     \Vsol_2 \LQIS \Vsol_2^{\top}~.
\end{align*}
A similar approach can be used to show that the equality holds if several $\hat{\lambda}_i$ have multiplicity larger than one.\\
\null \qed
\end{proof}

Thus if the multiplicity of eigenvalues of $\EigFVhat$ implies the multiplicity of the corresponding eigenvalue in $\LQIS$, the resulting matrix will also be the same. This is true in particular if $\hat{\lambda}_i \neq \hat{\lambda}_j$ for all $i,j$. Consequently, under the conditions of Lemma \ref{unique}, $\Vsol$ in \eqref{recursion} is unique under the equivalence relation $\eqiv$  with
\[
\mathbf{U}_1 \eqiv \mathbf{U}_2 \iff \mathbf{U}_1 \LQIS (\mathbf{U}_1)^{\top} =\mathbf{U}_2 \LQIS (\mathbf{U}_2)^{\top}.
\]

More generally if we consider the space of equivalence classes $\Ortheq:=\Orth\setminus \eqiv $ and define the metric 
\[
\tilde{d}_{\LQIS}([\mathbf{U}_1], [\mathbf{U}_2]):=\| \mathbf{U}_1 \LQIS \mathbf{U}_1^{\top} -  \mathbf{U}_2 \LQIS \mathbf{U}_2^{\top} \|_F,
\]
where $[\mathbf{U}]:=\{\mathbf{U}_0 \in \Orth: \mathbf{U}_0\eqiv
\mathbf{U}\} $, we obtain the following lemma.

\begin{restatable}{lemma}{lfive}\label{uniqueoverell}
Assume that 
\begin{align}\label{cond}
\forall \ell \ \ \hat{\lambda}_i^{[\ell]}=\hat{\lambda}_j^{[\ell]} \implies \hat{\delta}_i=\hat{\delta}_j.
\end{align}
Then we can write iteration \eqref{iteration1} in terms of equivalence classes:
\begin{align}\label{iterationwithequiv}
    [\Vit{\ell + 1}] = \mathcal{E}( F([\Vit{\ell}]))~.
\end{align}
Moreover there exists $[\Vsol] \in \Ortheq$ such that \eqref{recursion} holds and the generated sequence $\left([\Vit{\ell}]\right)_{\ell=1}^{\infty}$ satisfies
\begin{align*}
    \tilde{d}_{\LQIS}([\Vit{\ell}], [\Vsol]) \to 0~.
\end{align*}
\end{restatable}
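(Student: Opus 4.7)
The plan is to lift iteration \eqref{iteration1} to the quotient space $\Ortheq$ and then transfer the convergence in Theorem~\ref{awesometheorem} to the metric $\tilde{d}_{\LQIS}$ via continuity of the map $\Phi(\mathbf{U}) := \mathbf{U} \LQIS \mathbf{U}^{\top}$. First, for orthogonal $\mathbf{U}$ one has $(\mathbf{U}\LQIS\mathbf{U}^{\top})^{-1}=\mathbf{U}\LQIS^{-1}\mathbf{U}^{\top}$, so $\mathbf{U}_1 \eqiv \mathbf{U}_2$ immediately forces $\mathbf{U}_1\LQIS^{-1}\mathbf{U}_1^{\top}=\mathbf{U}_2\LQIS^{-1}\mathbf{U}_2^{\top}$, and inspecting \eqref{fdef} shows $F(\mathbf{U}_1)=F(\mathbf{U}_2)$. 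Hence $F$ descends to a well-defined map on $\Ortheq$.

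To obtain \eqref{iterationwithequiv}, I would pick any two $\Vit{\ell+1}_1, \Vit{\ell+1}_2 \in \mathcal{E}(F(\Vit{\ell}))$. These share the eigenvalues $\hat{\lambda}_i^{[\ell+1]}$ of $F(\Vit{\ell})$, and assumption \eqref{cond} applied at index $\ell+1$ ensures these eigenvalues collide only where the target eigenvalues in $\LQIS$ collide. This is exactly the hypothesis of Lemma~\ref{unique}, which then forces $\Vit{\ell+1}_1 \LQIS (\Vit{\ell+1}_1)^{\top}=\Vit{\ell+1}_2 \LQIS (\Vit{\ell+1}_2)^{\top}$, so the entire set $\mathcal{E}(F(\Vit{\ell}))$ sits inside a single equivalence class, as required.

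For the remaining claims, compactness of $\Orth$ supplies a subsequence $\Vit{\ell_k}$ converging to some $\Vsol$, and Theorem~\ref{awesometheorem} together with closedness of $\Vset$ puts $\Vsol \in \Vset$, so $[\Vsol]$ satisfies \eqref{recursion}. Uniform continuity of $\Phi$ on the compact set $\Orth$ then transfers $d(\Vit{\ell},\Vset)\to 0$ into the set-distance statement $\tilde{d}_{\LQIS}([\Vit{\ell}], \Phi(\Vset)) \to 0$.

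The main obstacle is upgrading this set-distance convergence to convergence toward the single class $[\Vsol]$. I would handle this by a contradiction mirroring the proof of Theorem~\ref{awesometheorem} but executed in the quotient: if some subsequence satisfied $\tilde{d}_{\LQIS}([\Vit{\ell_k}], [\Vsol]) \geq \varepsilon$, compactness of $\Phi(\Orth)$ would extract a further subsequence converging in $\tilde{d}_{\LQIS}$ to some $[\Vsol']$ with $\tilde{d}_{\LQIS}([\Vsol'],[\Vsol]) \geq \varepsilon$. Since $f$ also descends to $\Ortheq$ (it depends on $\mathbf{U}$ only through $\mathbf{U}\LQIS^{-1}\mathbf{U}^{\top}$) and Lemma~\ref{surrogateproperites} carries through verbatim in the quotient by \eqref{iterationwithequiv}, the MM argument of Theorem~\ref{awesometheorem} places $[\Vsol'] \in \Vset/\eqiv$. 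Combining the monotone convergence of $f([\Vit{\ell}])$ to a common limiting value with the single-valuedness of the iteration on $\Ortheq$ (which makes the orbit a deterministic dynamical system) should then force $[\Vsol']=[\Vsol]$, delivering the contradiction and closing the proof.
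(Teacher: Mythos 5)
Your lifting of the iteration to $\Ortheq$, the use of Lemma~\ref{unique} to make $\mathcal{E}(F([\Vit{\ell}]))$ a single class, and the transfer of compactness to $(\Ortheq,\tilde{d}_{\LQIS})$ all match the paper's route. The problem is the closing step. A preliminary slip first: you claim that $\Vsol\in\Vset$ implies $[\Vsol]$ satisfies \eqref{recursion}. The implication runs the other way --- $\Vset$ is defined by the commutation condition \eqref{critical_conditions}, which is strictly weaker than \eqref{recursion}. To get \eqref{recursion} for a subsequential limit you must use the stronger fact established \emph{inside} the proof of Theorem~\ref{awesometheorem}: the limit is a global minimizer of its own surrogate $g(\cdot\mid\Vit{\infty})$, hence by the argument of Lemma~\ref{surrogateproperites} it diagonalizes $\f{\Vit{\infty}}$ with matching order, which is exactly \eqref{recursion}.

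The genuine gap is your final claim that ``monotone convergence of $f([\Vit{\ell}])$ to a common limiting value'' plus ``single-valuedness of the iteration on $\Ortheq$'' forces $[\Vsol']=[\Vsol]$. That is not an argument: a deterministic descent iteration can have many distinct subsequential limits, all critical and all at the same objective value (the orbit can drift along a connected level set of critical points), so neither determinism nor equality of limiting objective values pins down the limit class. The ingredient that actually closes the proof --- and the one the paper uses --- is that under \eqref{cond} the set of classes in $\Ortheq$ satisfying \eqref{recursion} is a \emph{singleton}: if $\Vsol_1$ and $\Vsol_2$ both meet \eqref{recursion}, Lemma~\ref{unique} yields $\Vsol_1\LQIS\Vsol_1^{\top}=\Vsol_2\LQIS\Vsol_2^{\top}$, i.e.\ $\Vsol_1\eqiv\Vsol_2$. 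Since every subsequential limit class satisfies \eqref{recursion} (by the corrected first point), all such classes coincide, and your subsequence-plus-contradiction scheme then delivers convergence of the full sequence in $\tilde{d}_{\LQIS}$. You invoke Lemma~\ref{unique} earlier to make the quotient iteration well defined, but you never deploy it at the one place where it is indispensable.
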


\begin{proof}
First we note that, since the values in $\LQIS$ are all strictly larger than zero,
\[
\mathbf{U}_1 \eqiv \mathbf{U}_2 \iff \mathbf{U}_1 \LQIS^{-1} (\mathbf{U}_1)^{\top} =\mathbf{U}_2 \LQIS^{-1} (\mathbf{U}_2)^{\top}.
\]
Thus for any two $\mathbf{U}_1 \in [\mathbf{U}]$, $\mathbf{U}_2 \in [\mathbf{U}]$, $F(\mathbf{U}_1)=F(\mathbf{U}_2)$, such that we may write $F$ directly as a function of the equivalence class, $F([\mathbf{U}])$. Moreover, since the eigenvalues of $F([\mathbf{V}]^{[\ell]})$ meet the multiplicity condition, the same proof as in Lemma \ref{unique} gives that any $\mathbf{U}_1 \in \mathcal{E}(F(\Vit{\ell}))$, $\mathbf{U}_2 \in \mathcal{E}(F(\Vit{\ell}))$ have $\mathbf{U}_1 \eqiv \mathbf{U}_2 $. Thus \eqref{iterationwithequiv} holds and we can write $[\mathbf{V}]^{[\ell]}=[\Vit{\ell}]$.

Moreover, by the same argument as above, the function value $f(\mathbf{U})$ of any member of an equivalence class is the same, such that we may again write $f([\mathbf{U}])$. Finally $\Ortheq$ is still compact with the metric $\tilde{d}_{\LQIS}$. Indeed consider a sequence $([\mathbf{U}_n])_n$ in $\Ortheq$. For each $n$ we choose an arbtriary representative $\mathbf{U}_n \in \Orth$, to form the sequence $\left( \mathbf{U}_n \right)_n$. Since $\Orth$ is compact, this sequence will have a convergent subsequence $\left( \mathbf{U}_{n_k} \right)_k$. We now show that the corresponding subsequence in $\Orth$, $\left( [\mathbf{U}]_{n_k} \right)_k$ converges in $\Ortheq$. Indeed notice that for any convergent sequence, that is, $\left( \mathbf{U}_n \right)_n$ such that $\mathbf{U}_n \to \mathbf{U}$, it follows by the continuity of the matrix product that
\begin{align*}
    \mathbf{U}_n \LQIS  \mathbf{U}_n ^{\top} \to \mathbf{U} \LQIS  \mathbf{U}^{\top}~,
\end{align*}
or 
\begin{align*}
    \tilde{d}_{\LQIS}([\mathbf{U}_n], [\mathbf{U}]) = \| \mathbf{U}_n \LQIS  \mathbf{U}_n ^{\top} \to \mathbf{U} \LQIS  \mathbf{U}^{\top} \|_F \to 0~.
\end{align*}
Applying this to $\left( [\mathbf{U}]_{n_k} \right)_k$, $ \tilde{d}_{\LQIS}([\mathbf{U}_{n_k}], [\mathbf{U}]) \to 0$. Since the sequence was arbitrary, every sequence in $\Ortheq$ has a convergent subsequence in $\Ortheq$ and thus $(\Ortheq, \tilde{d}_{\LQIS})$ is compact.

Finally we can trace the same steps as in Theorem \ref{awesometheorem} to show that
\begin{align*}
     \inf_{[\Vsol] \in \Vset_0} \tilde{d}_{\LQIS}([\Vit{\ell}],[\Vsol] ) \to 0~,
\end{align*}
where now the set $\Vset_0 \subset \Ortheq$ such that \eqref{recursion} holds has only one member $[\Vsol]$.\\
\null \qed
\end{proof}

At first, it might seem unclear how to enforce the eigenvalue condition in Lemma \ref{uniqueoverell}. However, since $\hat{\lambda}_i^{[\ell]}$, $i=1,\ldots,p$ are eigenvalues of the sample covariance matrix of the standardized sample, Theorem 1 of \cite{Amazingsampleeigenvalueresult} applies. This implies that the eigenvalues of $\EigFVhat_{+}^{[\ell]}$ are all nonzero and distinct with probability one. Thus we only need to ensure that, for $p \geq n$, the smallest $p-n+1$ eigenvalues in $\LQIS$ are all the same. This is enforced in Algorithm
\ref{VIteration} by simply setting the smallest $p-n+1$ values of $\LQIS$ to the value with the highest
multiplicity. The following lemma now obtains.

\begin{restatable}{lemma}{none}
Condition \eqref{cond} holds with probability one.
\end{restatable}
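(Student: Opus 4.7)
The plan is to invoke the cited eigenvalue-distinctness result of \cite{Amazingsampleeigenvalueresult} at each iteration $\ell$ and then take a countable union over $\ell$. Recall from the discussion preceding the lemma that $\hat{\lambda}_i^{[\ell]}$, $i = 1, \ldots, p$, are the eigenvalues of $\f{\Vit{\ell-1}}$, which up to scaling is the sample covariance matrix of the standardized observations $\mathbf{Z}_t/\sqrt{\mathbf{Z}_t^{\top}\Vit{\ell-1}\LQIS^{-1}(\Vit{\ell-1})^{\top}\mathbf{Z}_t}$, $t = 1, \ldots, n$. Since $\mathbf{Y}$ is drawn from an elliptical distribution with pd dispersion matrix, the joint law of $(\mathbf{Z}_1, \ldots, \mathbf{Z}_n)$ is absolutely continuous on the product of unit spheres, and the standardization at step $\ell$ is a measurable, almost-surely well-defined transformation. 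Hence the cited theorem applies and gives that the nonzero eigenvalues among $\hat{\lambda}_1^{[\ell]}, \ldots, \hat{\lambda}_p^{[\ell]}$, i.e. the diagonal entries of $\EigFVhat_+^{[\ell]}$, are almost surely pairwise distinct.

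From here I would split into two cases. First, if $p < n$, then $\EigFVhat^{[\ell]} = \EigFVhat_+^{[\ell]}$, so with probability one all the $\hat{\lambda}_i^{[\ell]}$ are distinct and the implication in \eqref{cond} is vacuously satisfied. Second, if $p \geq n$, then by the structure of $\EigFVhat^{[\ell]}$ given in the text, exactly the smallest $p - n + 1$ eigenvalues vanish, and on the almost-sure event provided by the cited theorem these are the \emph{only} eigenvalues with multiplicity (all others being distinct). Thus a collision $\hat{\lambda}_i^{[\ell]} = \hat{\lambda}_j^{[\ell]}$ with $i \neq j$ can only occur among the indices corresponding to the $p - n + 1$ zero eigenvalues. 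By construction of Algorithm \ref{VIteration}, the smallest $p - n + 1$ entries of $\LQIS$ are set to a common value, so for such indices $\hat{\delta}_i = \hat{\delta}_j$, and \eqref{cond} holds at iteration $\ell$.

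Finally, let $A_\ell$ denote the almost-sure event on which the distinctness conclusion above holds at iteration $\ell$. The overall event on which \eqref{cond} holds for all $\ell \in \mathbb{N}$ contains $\bigcap_{\ell \geq 1} A_\ell$, which has probability one as a countable intersection of probability-one events. This yields the claim.

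The main obstacle is the rigorous justification that the cited sample-eigenvalue result is applicable at every iteration, since the standardized vectors at step $\ell$ depend on the iterate $\Vit{\ell-1}$, which is itself a measurable function of the same data. This is resolved by noting that the normalizing scalar $\mathbf{Z}_t^{\top}\Vit{\ell-1}\LQIS^{-1}(\Vit{\ell-1})^{\top}\mathbf{Z}_t$ is almost surely strictly positive (since $\LQIS$ has strictly positive diagonal) and by verifying that the joint distribution of the standardized sample remains absolutely continuous in the relevant sense, so that the genericity hypothesis underlying \cite[Theorem 1]{Amazingsampleeigenvalueresult} is inherited at each step.
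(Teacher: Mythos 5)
Your argument is correct and is essentially the same as the paper's (which gives no formal proof but makes exactly this argument in the paragraph preceding the lemma): the cited theorem yields almost-sure distinctness of the nonzero eigenvalues of the standardized sample covariance matrix at each iteration, so the only possible multiplicity is among the $p-n+1$ zero eigenvalues when $p \geq n$, which Algorithm \ref{VIteration} handles by equalizing the corresponding entries of $\LQIS$. Your additions — the explicit countable intersection over $\ell$ and the remark on why the cited result remains applicable when the standardization depends on $\Vit{\ell-1}$ — only make the paper's implicit reasoning more careful.
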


We thus obtain uniqueness of $\Vsol \LQIS \Vsol^{\top}$ and of $\mathbf{\hat{H}}$, up to scaling.

\begin{algorithm}
\normalsize
\textbf{Inputs}: centered data $\mathbb{Z}$, eigenvalue matrix $\LQIS$\;
\textbf{Output}: $\Vsol$\;
\textbf{Hyper-parameters}: Convergence Tolerance $\epsilon$ \;

Initiate $\ell=0$, $c(0)=-\infty$, $c(1)=0$, $\Vit{0}=\mathbf{I}$\;
 \While{$c(\ell+1)-c(\ell) > \epsilon  $}{
  - Calculate $\f{\Vit{\ell}}$ as in \eqref{fdef} and its eigendecomposition $\mathbf{U} \EigFl{\ell+1} \mathbf{U}^{\top}$, with $\EigFl{\ell+1}$ ordered\;
  - Take $\Vit{\ell + 1}=\mathbf{U}$\;
  - $\ell=\ell+1$\;
- $c(\ell+1)= \|(\Vit{\ell-1})^{\top} \f{\Vit{\ell-1}} \Vit{\ell-1} \LQIS^{-1} - \LQIS^{-1}  (\Vit{\ell})^{\top} \f{\Vit{\ell}} \Vit{\ell}\|_F$\;
 }
\Return{$\Vsol=\Vit{\ell}$}
 \caption{VIteration($\mathbb{Z}$, $\LQIS$)}
 \label{VIteration}
\end{algorithm}

\begin{algorithm}
\normalsize
\textbf{Inputs}: centered data $\mathbb{Y}$\;
\textbf{Output}: $\mathbf{\hat{H}}$\;
- Calculate $\mathbb{Z}\defeq \{\mathbf{Z}_1, \ldots \mathbf{Z}_T \}$ with $\mathbf{Z}_t\defeq \mathbf{Y}_t/\| \mathbf{Y}_t\|$, $t=1,\ldots,n$\;
- Obtain the sorted eigenvalues $\LQIS$ by applying NL to $\mathbb{Z}$\;
- If $p \geq n$: Ensure that the last $p-n+1$ elements of $\LQIS$ are equal\;
- Obtain $\Vsol\defeq $ VIteration($\mathbb{Z}, \LQIS$) using Algorithm \ref{VIteration}\;
- Calculate $\tilde{\mathbf{Z}}_t$, $t=1,\ldots,n$, as in \eqref{standardized}\;
 - Apply NL to the sample $\tilde{\mathbf{Z}}_1, \ldots \tilde{\mathbf{Z}}_n$ to obtain $\LRQIS$\;
 - Calculate $\mathbf{\hat{H}}$ as in \eqref{Hestimate}\; 
\Return{$\mathbf{\hat{H}}$}
 \caption{R-NL($\mathbb{Y}$)}
 \label{RNL}
\end{algorithm}

\begin{algorithm}
\normalsize
\textbf{Inputs}: centered data $\mathbb{Y}$\;
\textbf{Output}: $\mathbf{\hat{H}}$\;
- Calculate the diagonal matrix of sample standard deviations $\boldsymbol{\hat{\sigma}}$ of $\mathbb{Y}$\;
- Calculate $\mathbb{X}\defeq \{\mathbf{X}_1, \ldots, \mathbf{X}_T\}$ as in \eqref{firststand}\;
- Obtain $\mathbf{\hat{H}}_0\defeq$ R-NL($\mathbb{X}$) using Algorithm \ref{RNL}\;
 - Calculate $\mathbf{\hat{H}}\defeq p \cdot \boldsymbol{\hat{\sigma}}\mathbf{\hat{H}}_0 \boldsymbol{\hat{\sigma}}/\Tr(\boldsymbol{\hat{\sigma}}\mathbf{\hat{H}}_0 \boldsymbol{\hat{\sigma}})$\; 
\Return{$\mathbf{\hat{H}}$}
 \caption{R-C-NL($\mathbb{Y}$)}
 \label{RCNL}
\end{algorithm}

\subsection{Robust Correlation-Based Nonlinear Shrinkage}
In the context of covariance matrix estimation,
an alternative approach is to use shrinkage estimation for the
correlation matrix and to estimate the vector of variances separately,
after which one combines the two estimators to obtain a `final'
estimator of the covariance matrix itself.
Such an approach is used by \cite{comfortNL} in a static setting
(that~is, for i.i.d.\ data) and by 
\cite{engle:ledoit:wolf:2019,denard:engle:ledoit:wolf:2022} in a
dynamic setting (that~is, for time series data).
It turns out that by adapting this approach for our method, a considerable boost in performance can be achieved in some settings.

In particular, we first calculate the sample variances $\hat{\sigma}_1^2, \ldots, \hat{\sigma}_p^2$ and obtain the scaled data as
\begin{align}\label{firststand}
    \mathbf{X}_t\defeq \boldsymbol{\hat{\sigma}}^{-1} \mathbf{Y}_t~,
\end{align}
where $\boldsymbol{\hat{\sigma}} \defeq \diag(\hat{\sigma}_1, \ldots, \hat{\sigma}_p)$.
Then \mbox{R-NL} is applied to $\mathbf{Z}_t \defeq
\boldsymbol{\hat{\sigma}}^{-1}
\mathbf{Y}_t/\|\boldsymbol{\hat{\sigma}}^{-1} \mathbf{Y}_t \|$ to
obtain $\hat{\mathbf{H}}_0\defeq p \Vsol
\LRQIS\Vsol^{\top}/\Tr(\LRQIS)$. From these two inputs,
we calculate the `final' estimator of~$\mathbf{H}$~as
\begin{align}\label{Hestimatewithsig}
    \mathbf{\hat{H}}\defeq p\cdot\boldsymbol{\hat{\sigma}} \hat{\mathbf{H}}_0 \boldsymbol{\hat{\sigma}}/\Tr(\boldsymbol{\hat{\sigma}} \hat{\mathbf{H}}_0 \boldsymbol{\hat{\sigma}})~. 
\end{align}
The approach is called ``R-C-NL" and is summarized in Algorithm \ref{RCNL}.

As we demonstrate in Section \ref{sec:mc} this `variation' on our
methodology can have a 
substantial (beneficial)  effect on the performance of the estimator.
However, a potential disadvantage of this approach is that \mbox{R-C-NL} is
no longer rotation-equivariant.

\section{Simulation Study}
\label{sec:mc}

We compare our proposed two methods to several competitors in various simulation scenarios. From the collection of approaches that use Tyler's method together with (linear) shrinkage, we tried to pick the ones most appropriate for our analysis, without handpicking them to showcase the performance of our method. In particular, we did not pick methods that require the choice of a tuning parameter (such as \cite{existence_uniqueness_algorithms, Shrinkage_2017, convexpenalties}) or require $p < n$ such as \cite{Shrinkage_2021}. This leads us to the following benchmarks: 
\begin{itemize}
    \item LS: the linear shrinkage estimator of \cite{Wolflinear}.
    \item NL: the quadratic inverse shrinkage (QIS) estimator of \cite{QIS2020}.
    \item R-LS: the robust linear shrinkage estimator of \cite{linearshrinkageinheavytails}. An estimator that is ``widely used and performs well in practice'' \cite{existence_uniqueness_algorithms}.
    \item R-GMV-LS: the robust linear shrinkage estimator of \cite{yang2014minimum}. This estimator is designed for global-minimum-variance portfolios. 
    \item R-A-LS: the regularized estimator of \cite{zhang2016automatic}, a variation of the robust linear shrinkage approach. 
    \item R-TH: the robust estimator of \cite{thresholdingandTyler}, based on thresholding Tyler's M-estimator.
\end{itemize}

\noindent
Moreover, we consider the following six structures for the true dispersion matrix $\mathbf{H}$:
\begin{itemize}
    \item \sI\  Identity: The identity matrix $\mathbf{I}$.
    \item \sA\  AR: the $(i,j)$ element of $\mathbf{H}$ is $0.7^{\mid i - j\mid}$, as in \cite{linearshrinkageinheavytails}.
    \item \sF\  Full matrix: $1$ on the diagonal and $0.5$ on the off-diagonal.
    \item \sIsig\  Base: diagonal matrix, where $20\%$ of the 
      diagonal elements are equal to $1$, $40\%$ of the diagonal elements are
      equal to $3$, and $40\%$ of the diagonal elements are equal to
      $10$, as in \cite{WolfNL, Analytical_Shrinkage, QIS2020}.
    \item \sAsig\  AR (non-constant diag): start with  $\mathbf{H}$ as
      in (A) and then pre- and post-multiply with the square-root of
      the diagonal matrix as in (I$'$).
    \item \sFsig\  Full Matrix (non-constant diag): start with  $\mathbf{H}$ as
      in (F) and then pre- and post-multiply with the square-root of
      the diagonal matrix as in (I$'$).
\end{itemize}

Together these settings cover a wide range of structures for the
dispersion matrix $\mathbf{H}$, from sparse to a ``full'' matrix with
nonzero elements everywhere. Most papers related to our method
simulate from a multivariate $t$-distribution with $3$ or $4$ degrees
of freedom. In contrast, we let the degrees of freedom vary on a grid
from $3$ to ``infinity'', that is, to~the \mbox{Gaussian} case. It appears the
actual sample size does not matter as much as the concentration ratio
in the relative performance of the methods. As such we choose two
concentration ratios, $2/3$ and $4/3$, with a fixed sample size of
$n=300$. In Appendix \ref{sec:further_results} the same analysis
is done for $n=150$ and $p \in \{100, 200\}$. In addition, it contains an analysis comparing \mbox{R-NL} to the methodology of \cite{ourclosestcompetitor} over various shrinkage parameters $\alpha$ and using $\boldsymbol{\Lambda}_0$ as target eigenvalues.

Finally, as a measure for comparing the different methods we consider
the Percentage Relative Improvement in Average Loss (PRIAL) defined as
$$\text{PRIAL}(\hat{\mathbf{H}}_*) \defeq 100 \times \left( 1-
  \frac{\E[\Vert\hat{\mathbf{H}}_* -
    \mathbf{H}\Vert^2]}{\E[\Vert\Sample - \mathbf{H}\Vert^2]}\right)
\%~,$$
where $\hat{\mathbf{H}}_*$ denotes a generic estimator of
$\mathbf{H}$. Note that our definition of PRIAL differs from the one
in \cite{QIS2020}: For both definitions, the value of 0 corresponds to
the (scaled) sample covariance matrix; but the value of 100 corresponds to the
true matrix in our definition whereas it corresponds to the `oracle'
estimator in the class of \cite{stein:1975,stein:1986} in the
definition of \cite{QIS2020}. It does not make sense for us to use the
definition of  \cite{QIS2020}, since our estimator, unlike the their QIS
estimator, is not in the Steinian class. We also note that all matrices in the above PRIAL
are scaled to have trace $p$.

Figures \ref{fig:SimN200} and \ref{fig:SimN400} show the results. It
is immediately visible that in the majority of considered settings
both \mbox{R-NL} and \mbox{R-C-NL} outperform the other estimators. One major exception is the AR case, where the tresholding algorithm R-TH dominates all other methods. Also, in the setting \sI, for both $p=200$ and $p=400$, LS and R-A-LS recognize that shrinking maximally towards a multiple of the identity matrix is optimal, reaching a PRIAL of almost $100\%$ through all $\nu$. Although \mbox{R-NL} and \mbox{R-C-NL} are close, they cannot quite match this strong performance. For the cases \sF \ and \sFsig, when $p=200$, there is a performance drop of our methods compared to LS and NL for $\nu \geq 30$. However, the values of the methods again stay close. As one would expect, the PRIAL values of \mbox{R-NL} and \mbox{R-C-NL} are similar in the first row, where the true matrix~$\mathbf{H}$ has constant diagonal elements. Moreover, although NL is greatly improved upon with both \mbox{R-NL} and \mbox{R-C-NL} for small to moderate $\nu$, both converge to the performance of NL as the degrees of freedom increase. In the case \sIsig, where the diagonal elements are non-constant, \mbox{R-C-NL} attains a strong boost compared to \mbox{R-NL} and NL, such that it outperforms all other benchmarks by a considerable margin. The improvement is smaller, but consistent, for \sAsig \ for both $p$ and for \sFsig \ in the case $p=400$. We note that the consistently high performance of both methods through most settings is quite remarkable. In Appendix \ref{sec:further_results} further simulations with similar findings are presented. Given that other methods, such as that of \cite{zhang2016automatic}, perform quite
well on balance, but can collapse in some cases, this consistent-throughout
performance appears remarkable.

It is also worth noting how well the linear shrinkage methods perform
in the setting \sIsig\ with non-constant diagonal elements. This may seem
counterintuitive, since the shrinkage target is a multiple of the
identity matrix. Indeed, the diagonal elements of the linear shrinkage
estimates are close to constant in these cases. However, at least for
heavy-tails, the errors the sample matrix admits on the off-diagonal
elements far outweigh the errors of constant diagonal
elements. Additionally, LS, to which most other papers compare their
methods, is extremely competitive with the robust methods (even if
$\nu$ is very small). This is especially true for $\mathbf{H}$ with
non-constant diagonal elements, which most of the previous papers do
not consider. The good performance of LS might also be due to the
relatively high sample size used in this paper compared to others.

\begin{figure*}[!ht]
\begin{adjustwidth}{-2cm}{}
    \centering
    \includegraphics[width=20cm]{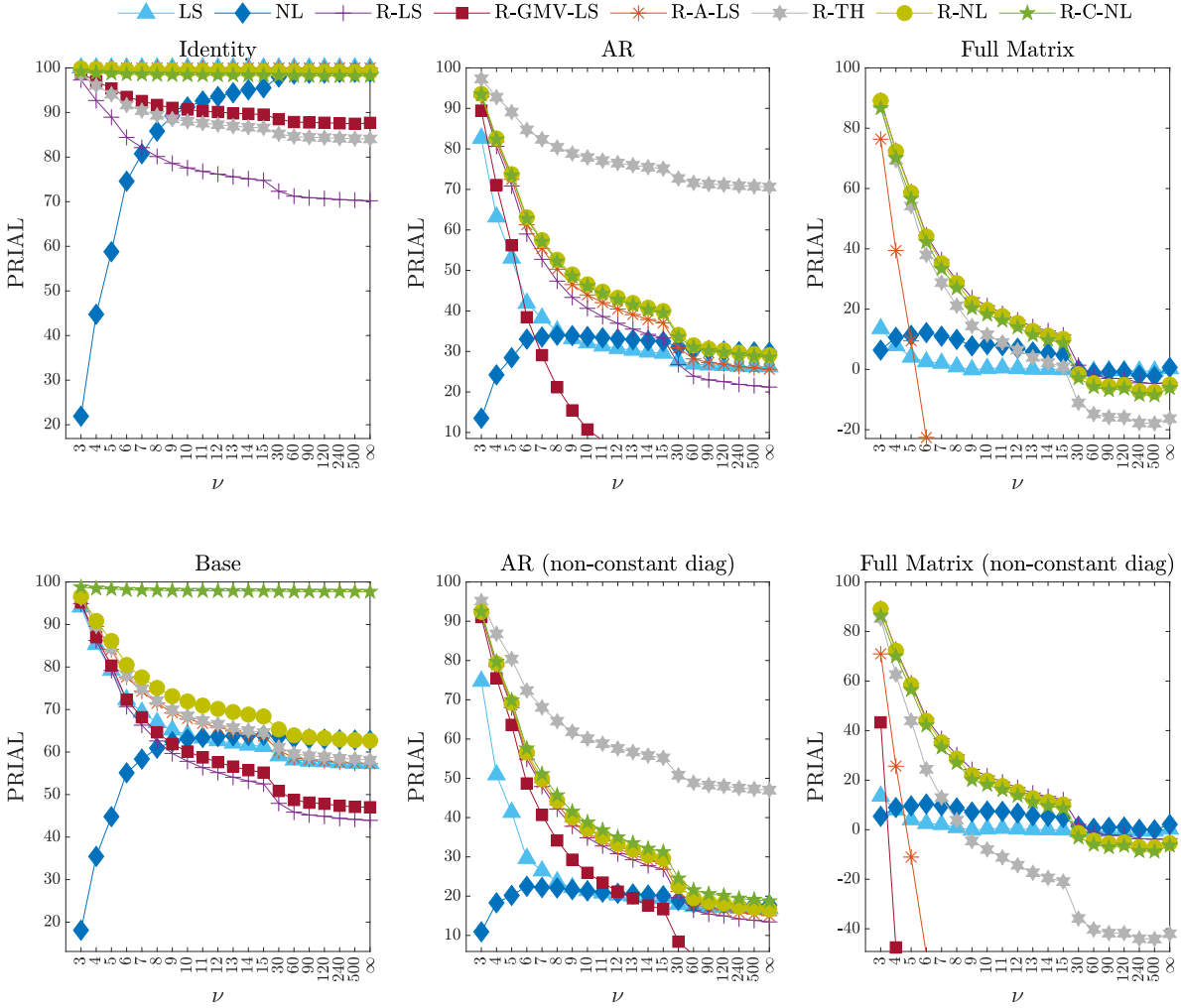}
    \end{adjustwidth}
    \vspace*{-0cm}
    \caption{Percentage Relative Improvement in Average Loss (PRIAL) for various dispersion matrix structures, $\nu \in \{3,4,\dots,14,15,30,60,90,120,240,500,\infty\}$, $p=200$ and $n=300$. The plots are scaled such that they extend to $100$ and NL and R-C-NL are fully visible.}
    \label{fig:SimN200}
\end{figure*}

\begin{figure*}[!ht]
\begin{adjustwidth}{-2cm}{}
    \centering
    \includegraphics[width=20cm]{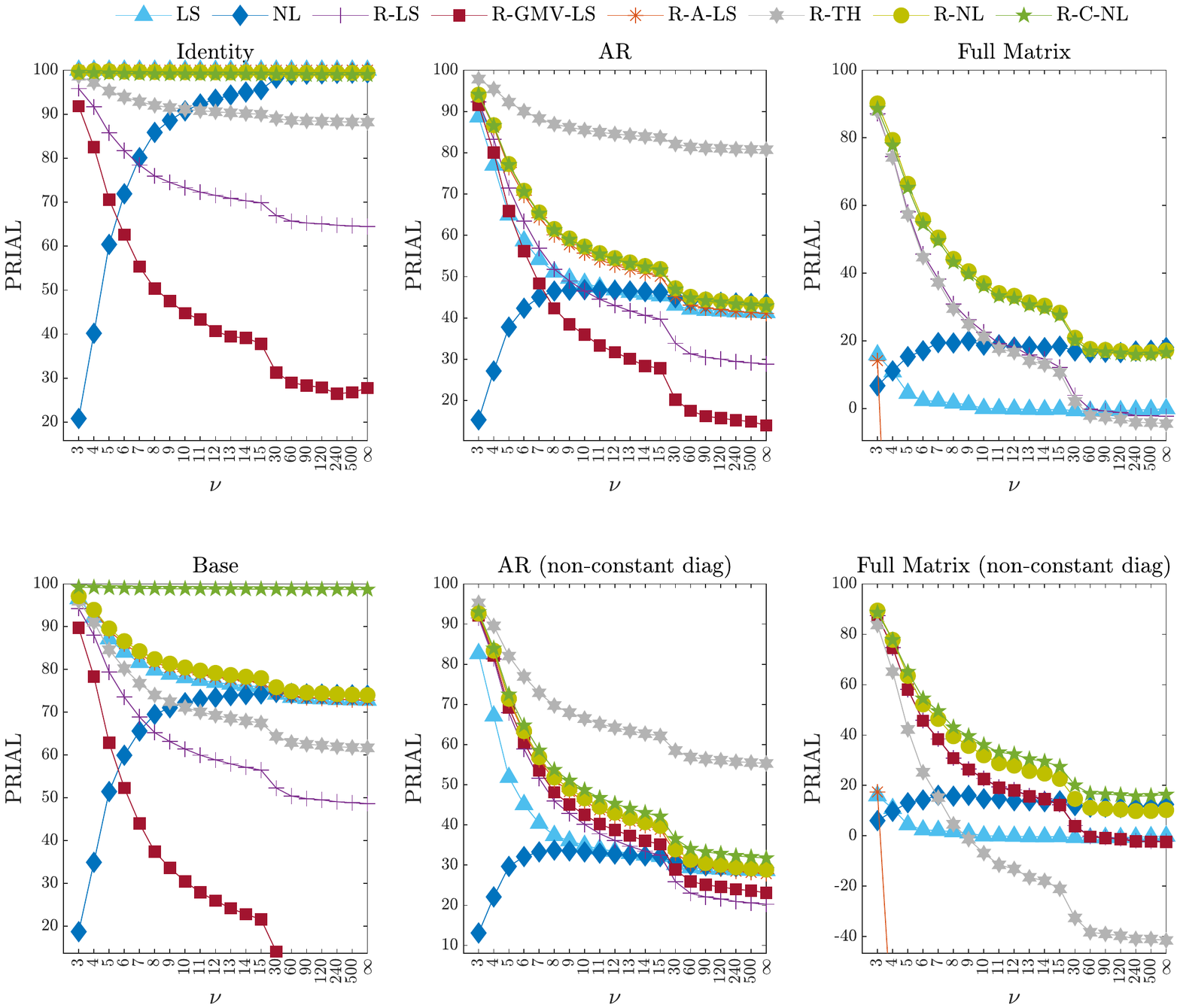}
    \end{adjustwidth}
    \vspace*{0cm}
    \caption{Percentage Relative Improvement in Average Loss (PRIAL) for various dispersion matrix structures, $\nu \in \{3,4,\dots,14,15,30,60,90,120,240,500,\infty\}$, $p=400$ and $n=300$. The plots are scaled such that they extend to $100$ and NL and R-C-NL are fully visible.}
    \label{fig:SimN400}
\end{figure*}

\section{Empirical Study}
\label{sec:empirics}

From the Center for Research in Security Prices (CRSP) we download daily simple percentage returns of the NYSE, AMEX, and NASDAQ stock exchanges, see \url{https://www.crsp.org/node/1/activetab\%3Ddocs} for a documentation. The historical data ranges from 02.01.1976 until 31.12.2020 and contains $p^* = 23'131$ stocks in total. The outline of the empirical section is inspired by \cite{Gian2019}.

We conduct a rolling window type exercise, where we consider an estimation window of one year (252 days) and another one with five years (1260 days). In each rolling window we estimate the covariance matrix and perform a minimum variance portfolio optimization with no short selling limits. In the unconstrained case, the global minimum variance portfolio problem is formulated as 

$$\min\limits_{\tilde{w}} \tilde{w}^{\top} \boldsymbol{\hat{\Sigma}} \tilde{w}$$
$$\text{subject to} \quad \tilde{w}^{\top}\mathbbm{1} = 1~,$$
where $\mathbbm{1}$ denotes a conformable vector of ones. In the absence of any short-sales constraints the problem has the analytical solution 
$$w \defeq \frac{\boldsymbol{\hat{\Sigma}}^{-1}\mathbbm{1}}{\mathbbm{1}^{\top}\boldsymbol{\hat{\Sigma}}^{-1}\mathbbm{1}}~.$$

The resulting number of shares (rather than portfolio weights) are
then kept fixed for the following
21~days (out-of-sample window). Afterwards the rolling window moves
forward by 21~days, the covariance matrix is re-estimated and the
weights are updated accordingly. In short, we rebalance the portfolio
once a `month', and there are no transaction costs during the `month',
where our definition of a `month' corresponds to 21
consecutive trading days rather than a calendar month.
Depending on the size of the estimation window, the
out-of-sample period starts on 14-Jan-1977 respectively 13-Jan-1981.
This results in 528 respectively 480 out-of-sample months. In
each month we only consider stocks which have no more than 32 days of
missing values during the estimation window and a complete return in
the out-of-sample window. The missing values in the remaining universe
are set to $0$. Further, every month, only the $p$ stocks with the
highest market capitalization are considered, where
$p~\in~\{100,500,1000\}$.

The solution of the minimum variance portfolio only depends on the
second moment, that is, the covariance matrix. Therefore, as a portfolio
evaluation criterion, the out-of-sample standard deviation is the
leading criterion of interest. Hence, in the main text we consider the following two portfolio performance measures:
\begin{itemize}
    \item SD: annualized standard deviation of portfolio returns.
    \item TO: average monthly turnover given by 
    
    $$\text{TO} \defeq
    \frac{1}{(\tau-1)}\sum_{h=1}^{\tau-1}\sum_{j=1}^{p^{*}}\mid
    w_{j,h+1} - w_{j,h}^\text{hold}\mid~,$$
Here $p^*$ denotes the size of the `combined' investment universe over
both months, $h$ and $h+1$; in general some stocks leave the universe,
while the same number of new stocks enter the universe, as one advances
from month $h$ to month $h+1$ such that $p^* \ge p$. Furthermore,
$\tau$ denotes the number of out-of-sample months (528 and 480, respectively) and 
    
    $$w_{j,h}^\text{hold} \defeq \frac{w_{j,h}\alpha_{j,h}}{\sum_{j=1}^{p^{*}}w_{j,h}\alpha_{j,h}}~,$$
    
    with $$\alpha_{j,h} \defeq \prod_{s=0}^{20}(1+r_{j,t_h+s})$$ representing the return evolution in the days of month $h$. Note: if stock $j$ is not contained in the universe during month $h$, we set $w_{j,h} = 0$ and $r_{j,t_h+s} = 0 \ \forall \ s\in\{0,1,\dots,20\}$. 
\end{itemize}

\noindent
We consider the same competitors as in Section \ref{sec:mc}, with one exception: We remove the thresholding method ``R-TH'', since the matrix inverse necessary for the portfolio optimization cannot always be computed. We also add the (scaled) sample covariance matrix $\Sample$, which we denote with $S$.

Additionally, in order to test whether the difference of the out-of-sample standard deviation between NL and \mbox{R-C-NL} is significantly different from zero, we apply the HAC inference of \cite{ledoit2011robust}.

Table~\ref{table:Tab:Performance_CRSP} presents the main results; for
additional results pertaining to alternative performance measures, 
see Appendix \ref{sec:further_results}.
The findings are as follows:
\begin{itemize}
    \item \mbox{R-C-NL} has the lowest SD in every scenario.
    \item \mbox{R-C-NL} has a significantly lower SD than NL in every
    scenario.
    \item \mbox{R-NL} has a lower SD than NL, except for $p=1000$, where they have a similar SD.
    \item The difference in SD between \mbox{R-NL} and \mbox{R-C-NL} increases as $p$ increases.
    \item For both $n=252$ and $n=1260$, the robust estimators have a
      lower SD than the non-robust estimators when $p$ is small. For
      large $p$ the non-robust estimators perform similarly or better
      than the robust estimators. This holds for linear shrinkage and
      nonlinear shrinkage estimators; an exception is R-GMV-LS  in the case $n=1260$.
    \item For $p=500$ and $p=1000$, NL always outperforms the robust linear shrinkage estimators.
    \item For $p=100$, NL and the robust linear shrinkage estimators perform similarly.
    \item Except for the case $p=100$, \mbox{R-C-NL} has the lowest TO in
      every scenario.
    \item \mbox{R-NL} always has lower TO than NL.
\end{itemize}

We note that the improvement of \mbox{R-NL} over NL in terms of SD is relatively small, and indeed substantial improvements in SD only occur for \mbox{R-C-NL}. In other words, the strongest improvement appears to stem from using NL on the correlation matrix. However, NL is known to be a very strong benchmark in unconstrained PF optimization. Consequently, already the comparatively small gain of \mbox{R-NL} over NL leads to the lowest SD in five out of six considered scenarios.

\begin{table*}[!ht]                                                  \footnotesize    
\caption{Out-of-sample portfolio statistics for the largest $p$ stocks on CRSP and an estimation window of $n=252$ and $n=1260$ days, respectively. Significant outperformance of the introduced R-C-NL estimator over NL in terms of SD is indicated by asterisks: **~indicates significance at the 0.05 level and ***~indicates significance at the 0.01 level.}
\begin{adjustwidth}{-1.25cm}{}
\begingroup
\setlength{\tabcolsep}{6pt} 
\renewcommand{\arraystretch}{1.25}
\begin{tabular}{L{0.5cm}C{1.8cm}C{1.8cm}C{1.8cm}C{1.8cm}C{1.8cm}C{1.8cm}C{1.8cm}C{1.8cm}}

\hline                                                                                                                 
& S & LS & NL & R-LS & R-GMV-LS & R-A-LS & R-NL & R-C-NL \\                                         
\midrule

\hline \hline                                                                                                                   
 & \multicolumn{8}{c}{\rule[-2.5mm]{0mm}{8mm}\emph{$n=252$}}  \\                   
\hline \hline                                                                               
& \multicolumn{8}{c}{\rule[-2.5mm]{0mm}{8mm}\emph{$p=100$}}  \\                       
\hline                                                                                 
SD & $13.41$ & $12.36$ & $11.91$ & $12.29$ & $11.91$ & $12.20$ & $11.86$ & $\tcb{11.60}\text{***}$ \\  
\hline                                                                                 
TO & $2.23$ & $1.40$ & $1.02$ & $1.38$ & $\tcb{0.79}$ & $1.25$ & $0.91$ & $0.86$ \\          
\hline                                                                                 
& \multicolumn{8}{c}{\rule[-2.5mm]{0mm}{8mm}\emph{$p=500$}}  \\ 
\hline                                                                                 
SD & $-$ & $9.65$ & $9.10$ & $10.01$ & $9.41$ & $9.65$ & $9.08$ & $\tcb{8.49}\text{***}$ \\    
\hline                                                                                 
TO & $-$ & $2.29$ & $1.16$ & $2.72$ & $1.59$ & $2.18$ & $1.04$ & $\tcb{1.03}$ \\       
\hline                                                                                 
& \multicolumn{8}{c}{\rule[-2.5mm]{0mm}{8mm}\emph{$p=1000$}}   \\
\hline                                                                                 
SD & $-$ & $8.33$ & $8.23$ & $8.38$ & $8.45$ & $8.34$ & $8.23$ & $\tcb{7.04}\text{***}$ \\       
\hline                                                                                 
TO & $-$ & $1.59$ & $1.01$ & $1.76$ & $0.93$ & $1.56$ & $0.92$ & $\tcb{0.90}$ \\  
\midrule
\hline \hline                                                                               
& \multicolumn{8}{c}{\rule[-2.5mm]{0mm}{8mm}\emph{$n=1260$}}  \\                                
\hline \hline                                                                                                          
& \multicolumn{8}{c}{\rule[-2.5mm]{0mm}{8mm}\emph{$p=100$}}  \\                                 
\hline                                                                                                                 
SD & $12.83$ & $12.78$ & $12.68$ & $12.67$ & $12.69$ & $12.68$ & $12.59$ & $\tcb{12.55}\text{**}$ \\  
\hline                                                                                 
TO & $0.65$ & $0.60$ & $0.57$ & $0.54$ & $\tcb{0.38}$ & $0.53$ & $0.49$ & $0.48$ \\          
\hline                                                                                 
& \multicolumn{8}{c}{\rule[-2.5mm]{0mm}{8mm}\emph{$p=500$}}  \\ 
\hline                                                                                 
SD & $10.25$ & $9.82$ & $9.35$ & $9.72$ & $9.44$ & $9.69$ & $9.32$ & $\tcb{9.12}\text{***}$ \\
\hline                                                                                 
TO & $1.97$ & $1.53$ & $0.92$ & $1.45$ & $0.72$ & $1.41$ & $0.79$ & $\tcb{0.67}$ \\          
\hline                                                                                 
& \multicolumn{8}{c}{\rule[-2.5mm]{0mm}{8mm}\emph{$p=1000$}}   \\
\hline                                                                                 
SD & $12.55$ & $8.91$ & $8.04$ & $9.04$ & $8.12$ & $8.91$ & $8.05$ & $\tcb{7.46}\text{***}$ \\ 
\hline                                                                                 
TO & $6.56$ & $2.38$ & $0.94$ & $2.52$ & $0.84$ & $2.35$ & $0.80$ & $\tcb{0.64}$ \\ 
\hline                                                                                                                                                                                          
\end{tabular} 
\endgroup                                                            
\label{table:Tab:Performance_CRSP}    
\end{adjustwidth}
\end{table*} 

\newpage

\normalsize

\section{Conclusion}
\label{sec:conclusion}

This paper combines nonlinear shrinkage with Tyler's method, thereby
creating a fast and stable algorithm to estimate the dispersion matrix
in elliptical models; the resulting estimator is robust against both
heavy tails and high dimensions.
We developed the algorithm by separating
calculation of the eigenvalues and the eigenvectors and showed that
eigenvectors could be obtained by an iterative procedure. We also
showed that the resulting \mbox{R-NL} estimator is still
rotation-equivariant, although it no longer is contained in the
Steinian class of rotation-equivariant estimators that keeps the
vectors of the sample covariance matrix and only shrinks the sample
eigenvalues. We also compared our approach to existing methods from
the literature using both extensive simulations and an application to
real data, showcasing its favorable performance.
Last but not least, it turns out that a further performance boost can
be obtained by using our method on scaled data, which basically
amounts to separating the problem of estimating a covariance matrix
into estimation of individual variances and estimation of the
correlation matrix; the resulting estimator is called \mbox{R-C-NL}.



\newpage

\appendix

\section{Further Empirical and Simulation Results}
\label{sec:further_results}

Figures \ref{fig:SimN100T150} and \ref{fig:SimN200T150} show the simulation results for $n=150$ and $p=100$ and $p=200$ respectively. Figure \ref{fig:SimN200T300_alpha} compares with the ``SRTy'' estimator of \cite{ourclosestcompetitor}, by using the NL shrinkage eigenvalues as target eigenvalues. That is, the target matrix of eigenvalues is given as $\boldsymbol{\Lambda}_0$ as in the main paper. The shrinkage strength $\beta$ is given by
\[
\beta \defeq \frac{\alpha}{1+\alpha},
\]
in \cite{ourclosestcompetitor}, so that $\beta=0$ corresponds to no shrinkage, whereas $\beta=1$ corresponds to maximal shrinkage (or setting $\alpha=\infty$).

Interestingly, in all but the settings (F) and (F'), setting $\beta=1$, and thus maximally shrinking towards $\boldsymbol{\Lambda}_0$ appears beneficial. In fact, the performance in this settings of $\beta=1$ is about the same as \mbox{R-NL}. However, in the settings (F) and (F'), the situation is reversed and performance gets worse, the higher $\beta$ is chosen. With the additional updating step performed in \mbox{R-NL}, performance is about the same as choosing $\beta=0$, showing that \mbox{R-NL} can have a substantial benefit over using SRTy with $\beta=1$.

\begin{figure*}[!ht]
\begin{adjustwidth}{-2cm}{}
    \centering
    \includegraphics[width=20cm]{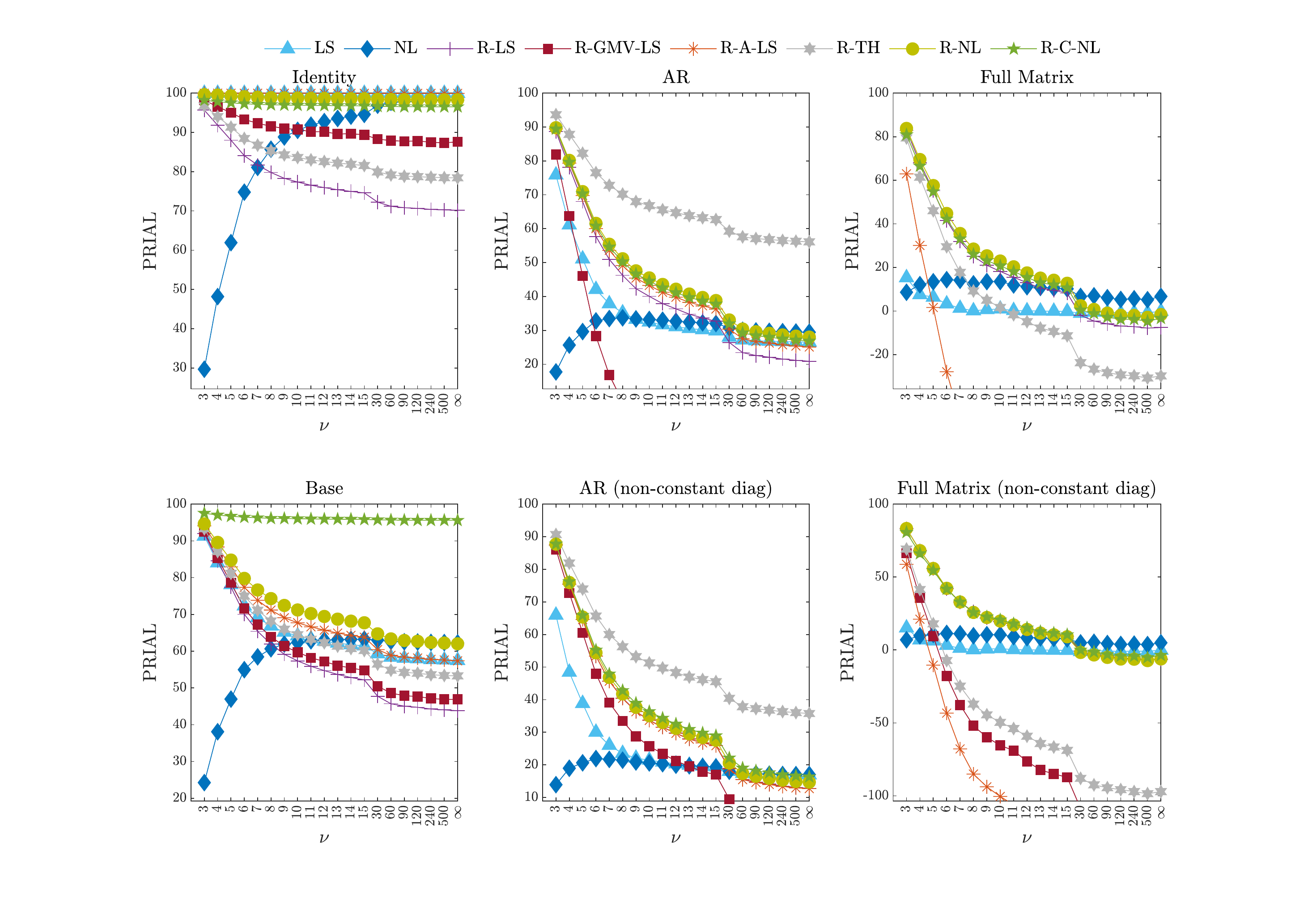}
    \end{adjustwidth}
    \vspace*{-1.5cm}
    \caption{Percentage Relative Improvement in Average Loss (PRIAL) for various dispersion matrix structures, $\nu \in \{3,4,\dots,14,15,30,60,90,120,240,500,\infty\}$, $p=100$ and $n=150$. The plots are scaled such that they extend to $100$ and NL and \mbox{R-C-NL} are fully visible. }
    \label{fig:SimN100T150}
\end{figure*}

\begin{figure*}[!ht]
\begin{adjustwidth}{-2cm}{}
    \centering
    \includegraphics[width=20cm]{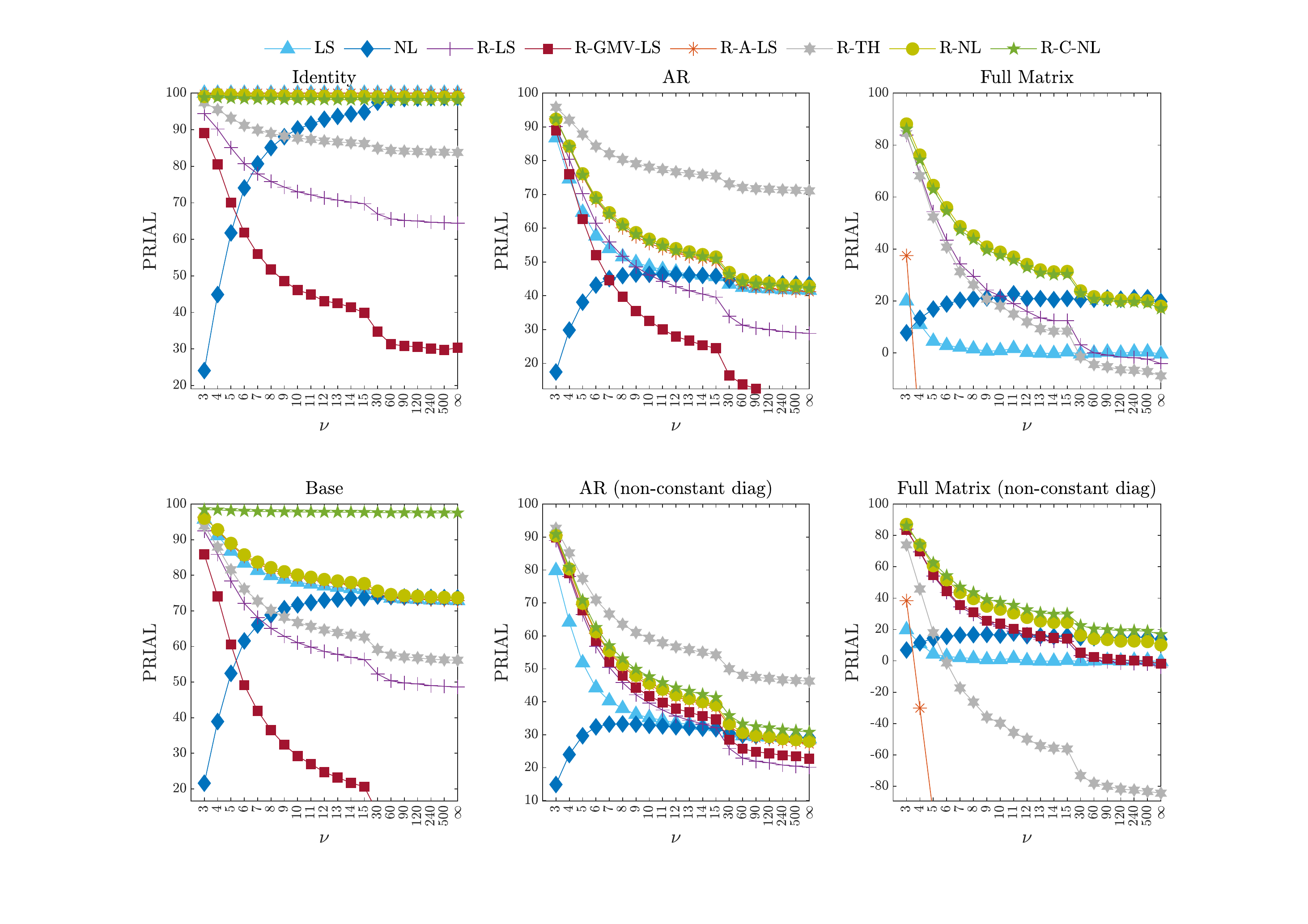}
    \end{adjustwidth}
    \vspace*{-1.5cm}
    \caption{Percentage Relative Improvement in Average Loss (PRIAL) for various dispersion matrix structures, $\nu \in \{3,4,\dots,14,15,30,60,90,120,240,500,\infty\}$, $p=200$ and $n=150$. The plots are scaled such that they extend to $100$ and NL and \mbox{R-C-NL} are fully visible.}
    \label{fig:SimN200T150}
\end{figure*}

\begin{figure*}[!ht]
\begin{adjustwidth}{-2cm}{}
    \centering
    \includegraphics[width=20cm]{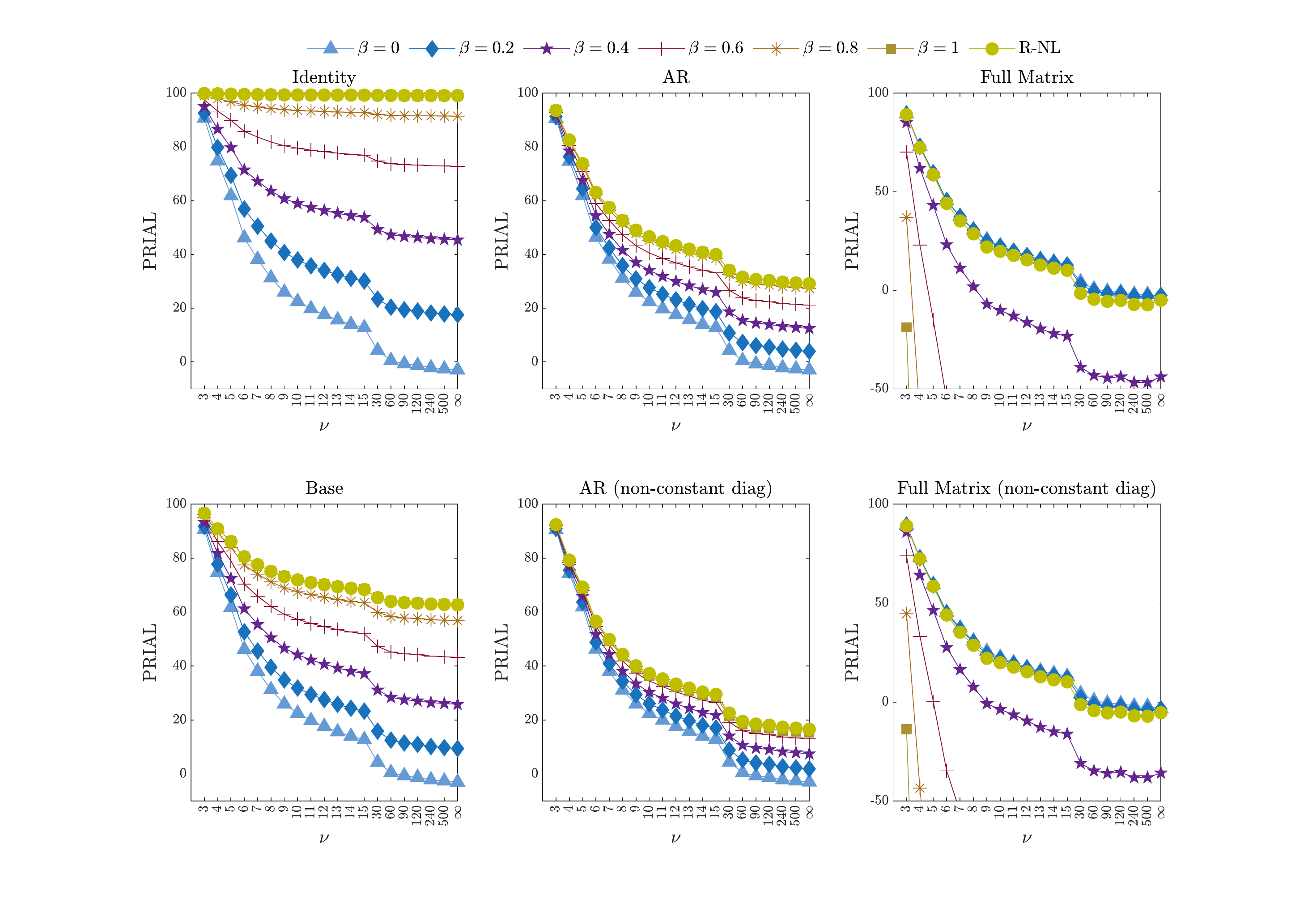}
    \end{adjustwidth}
    \vspace*{-1.5cm}
    \caption{Percentage Relative Improvement in Average Loss (PRIAL) for various dispersion matrix structures, $\nu \in \{3,4,\dots,14,15,30,60,90,120,240,500,\infty\}$, $p=200$ and $n=300$.}
    \label{fig:SimN200T300_alpha}
\end{figure*}

For our empirical application in Section 5 of the main paper, the following four additional performance measures are reported:

\begin{itemize}
    \item AV: annualized average simple percentage portfolio return.
    \item TR: final cumulative simple percentage portfolio return.
    \item MD: percentage maximum drawdown given by $$\text{MD} \defeq \max_{\bar{t}\in (0,n)}\left(\max_{t\in (0,\bar{t})}\left[\frac{\tilde{R}_t-\tilde{R}_{\tilde{t}}}{\tilde{R}_t}\right]\right)~,$$ where $\tilde{R}_t$ is the cumulative simple percentage portfolio return at day $t$. 
    \item IR: annualized information ratio given by $$IR \defeq \frac{AV}{SD}~.$$
\end{itemize}

\begin{table*}[!ht]
\footnotesize 
\caption{Additional out-of-sample portfolio statistics for the daily $p$ largest stocks on CRSP and an estimation window of $n=252$ and $n=1260$ days, respectively.}
\begin{adjustwidth}{-1.25cm}{}
\begingroup
\setlength{\tabcolsep}{6pt} 
\renewcommand{\arraystretch}{1.25}
\begin{tabular}{L{0.5cm}C{1.8cm}C{1.8cm}C{1.8cm}C{1.8cm}C{1.8cm}C{1.8cm}C{1.8cm}C{1.8cm}}                                                            

\hline                                                                                                                 
& S & LS & NL & R-LS & R-GMV-LS & R-A-LS & R-NL & R-C-NL \\                                         
\midrule

\hline \hline                                                                                                                   
 & \multicolumn{8}{c}{\rule[-2.5mm]{0mm}{8mm}\emph{$n=252$}}  \\                   

\hline \hline                                                                                                             
& \multicolumn{8}{c}{\rule[-2.5mm]{0mm}{8mm}\emph{$p=100$}}  \\                       

\hline                                                                                                                   
IR & $0.69$ & $0.80$ & $0.87$ & $0.81$ & $0.87$ & $0.82$ & $0.87$ & $0.91$ \\                                     
\hline                                                                                                            
AV & $9.24$ & $9.94$ & $10.31$ & $9.97$ & $10.37$ & $9.97$ & $10.32$ & $10.55$ \\                                 
\hline                                                                                                            
TR & $3,813.45$ & $5,548.60$ & $6,710.69$ & $5,641.21$ & $6,902.58$ & $5,677.55$ & $6,767.45$ & $7,607.62$ \\             
\hline                                                                                                            
MD & $46.56$ & $35.20$ & $38.48$ & $37.94$ & $33.40$ & $37.05$ & $35.48$ & $33.26$ \\                     
\hline                                                                                                            
& \multicolumn{8}{c}{\rule[-2.5mm]{0mm}{8mm}\emph{$p=500$}}  \\                            
\hline                                                                                                            
IR & $-$ & $1.17$ & $1.21$ & $1.16$ & $1.20$ & $1.20$ & $1.22$ & $1.35$ \\                                    
\hline                                                                                                            
AV & $-$ & $11.27$ & $11.01$ & $11.57$ & $11.26$ & $11.56$ & $11.07$ & $11.50$ \\                          
\hline                                                                                                            
TR & $-$ & $11,493.55$ & $10,447.08$ & $12,921.46$ & $11,521.84$ & $13,053.61$ & $10,776.80$ & $13,303.94$ \\
\hline                                                                                                            
MD & $-$ & $29.41$ & $29.24$ & $29.92$ & $30.33$ & $28.56$ & $30.61$ & $28.22$ \\                        
\hline                                                                                                            
& \multicolumn{8}{c}{\rule[-2.5mm]{0mm}{8mm}\emph{$p=1000$}}  \\                           
\hline                                                                                                            
IR & $-$ & $1.47$ & $1.44$ & $1.46$ & $1.42$ & $1.46$ & $1.43$ & $1.74$ \\                                     
\hline                                                                                                            
AV & $-$ & $12.21$ & $11.87$ & $12.23$ & $12.04$ & $12.21$ & $11.76$ & $12.22$ \\                            
\hline                                                                                                            
TR & $-$ & $18,342.65$ & $15,877.34$ & $18,502.15$ & $16,942.34$ & $18,347.06$ & $15,080.18$ & $19,254.29$ \\   
\hline                                                                                                            
MD & $-$ & $33.88$ & $34.39$ & $34.52$ & $34.82$ & $33.63$ & $35.19$ & $26.75$ \\                         
            
\midrule

\hline \hline                                                                                                                         
& \multicolumn{8}{c}{\rule[-2.5mm]{0mm}{8mm}\emph{$n=1260$}}   \\                                      
\hline \hline                                                                                                                         
& \multicolumn{8}{c}{\rule[-2.5mm]{0mm}{8mm}\emph{$p=100$}}   \\                                       
\hline                                                                                                                       
IR & $0.93$ & $0.94$ & $0.94$ & $0.94$ & $0.95$ & $0.94$ & $0.93$ & $0.95$ \\                                     
\hline                                                                                                            
AV & $11.95$ & $12.06$ & $11.97$ & $11.89$ & $12.01$ & $11.90$ & $11.74$ & $11.86$ \\                             
\hline                                                                                                            
TR & $8,428.17$ & $8,863.69$ & $8,585.57$ & $8,323.75$ & $8,709.89$ & $8,341.21$ & $7,862.18$ & $8,267.73$ \\             
\hline                                                                                                            
MD & $39.08$ & $37.83$ & $38.08$ & $36.21$ & $34.00$ & $36.12$ & $35.31$ & $34.98$ \\                     
\hline                                                                                                            
& \multicolumn{8}{c}{\rule[-2.5mm]{0mm}{8mm}\emph{$p=500$}} \\                            
\hline                                                                                                            
IR & $1.08$ & $1.14$ & $1.21$ & $1.16$ & $1.24$ & $1.17$ & $1.23$ & $1.30$ \\                                     
\hline                                                                                                            
AV & $11.04$ & $11.23$ & $11.33$ & $11.23$ & $11.67$ & $11.30$ & $11.47$ & $11.82$ \\                             
\hline                                                                                                            
TR & $6,599.24$ & $7,256.96$ & $7,703.75$ & $7,271.52$ & $8,779.64$ & $7,487.65$ & $8,162.71$ & $9,462.95$ \\             
\hline                                                                                                            
MD & $33.16$ & $31.64$ & $32.92$ & $30.85$ & $31.07$ & $30.79$ & $31.48$ & $31.56$ \\                     
\hline                                                                                                            
& \multicolumn{8}{c}{\rule[-2.5mm]{0mm}{8mm}\emph{$p=1000$}} \\                           
\hline                                                                                                            
IR & $0.93$ & $1.37$ & $1.52$ & $1.37$ & $1.53$ & $1.39$ & $1.53$ & $1.67$ \\                                     
\hline                                                                                                            
AV & $11.71$ & $12.23$ & $12.26$ & $12.38$ & $12.46$ & $12.38$ & $12.29$ & $12.50$ \\                             
\hline                                                                                                            
TR & $7,799.17$ & $11,257.77$ & $11,731.10$ & $11,893.95$ & $12,676.34$ & $11,951.51$ & $11,857.04$ & $13,137.98$ \\      
\hline                                                                                                            
MD & $37.38$ & $33.98$ & $32.18$ & $33.42$ & $32.07$ & $33.58$ & $32.28$ & $27.65$ \\              
\hline                                                                                                                                                                                           
\end{tabular} 
\endgroup                                                            
\label{table:Tab:Performance_CRSP_appendix}    
\end{adjustwidth}
\end{table*} 


\clearpage
 \newpage

 \bibliographystyle{apalike}
 \bibliography{multt}

\end{document}